\documentclass[aps,prl,superscriptaddress,longbibliography,showpacs,twocolumn]{revtex4-2}

\usepackage{pdfpages}

\makeatletter
\AtBeginDocument{\let\LS@rot\@undefined}
\makeatother

\usepackage{soul}
\setstcolor{red}
\usepackage{comment}

\usepackage{lipsum}
\usepackage{caption}
\usepackage{tikz}
\usetikzlibrary{arrows.meta}

\DeclareCaptionJustification{justified}{\leftskip=0pt \rightskip=0pt \parfillskip=0pt plus 1fil}

\usepackage{graphicx}% Include figure files
\usepackage{dcolumn}% Align table columns on decimal point
\usepackage{bm}% bold math
\usepackage{caption}
\usepackage{amsmath,amssymb,amsthm}
\usepackage{braket}
\usepackage{empheq}
\usepackage{subfig}
\usepackage{tikz}
\usepackage{url}
\usepackage{hyperref}
\usepackage{quantikz}
\usepackage{xcolor}
\usepackage{float}
\DeclareMathOperator{\Tr}{Tr}
\DeclareMathOperator{\poly}{poly}

\usepackage{algorithm}
\usepackage[noend]{algpseudocode}% http://ctan.org/pkg/algorithmicx

\hypersetup{
    breaklinks=true,
    colorlinks=true,       % false: boxed links; true: colored links
    linkcolor=purple,          % color of internal links
    citecolor=purple,        % color of links to bibliography
    filecolor=magenta,      % color of file links
    urlcolor=purple,           % color of external links
    runcolor=purple
}

\newtheorem{theorem}{Theorem}
\newtheorem*{theorem*}{Theorem}
\newtheorem{corollary}{Corollary}%[theorem]
\newtheorem{lemma}[theorem]{Lemma}

\usepackage{etoolbox}
\makeatletter
\patchcmd{\@citex}{\unskip, et al.}{\unskip et al.}{}{}
\makeatother
\begin{document}
\title{High-order dynamical decoupling in the weak-coupling regime}
\author{Leeseok Kim}
\affiliation{Center for Quantum Information and Control and Department of Electrical and Computer Engineering, University of New Mexico, NM 87131, USA}

\author{Milad Marvian}
\affiliation{Center for Quantum Information and Control and Department of Electrical and Computer Engineering, University of New Mexico, NM 87131, USA}

\begin{abstract}
We introduce a high-order dynamical decoupling (DD) scheme for arbitrary {bounded} system-bath interactions in the weak-coupling regime. Given any decoupling group $\mathcal G$ that averages the interaction to zero, our construction {guarantees the existence of pulse sequences with at most $(|\mathcal G|-1)K$ pulses}, while canceling all error terms linear in the system-bath coupling strength up to order $K$ in the total evolution time. As a corollary, for an $n$-qubit system with $k$-local system-bath interactions, we obtain an $\mathcal{O}(n^{k-1}K)$-pulse sequence, a significant improvement over existing schemes with $\mathcal{O}(\exp(n))$ pulses (for $k=\mathcal{O}(1)$). The construction is obtained via a mapping to the continuous necklace-splitting problem, which asks how to cut a multi-colored interval into pieces that give each party the same share of every color. We provide explicit pulse sequences for suppressing general single-qubit decoherence, prove that the pulse count is asymptotically optimal, and verify the predicted error scaling in numerical simulations. For the same number of pulses, we observe that our sequences outperform the state-of-the-art Quadratic DD in the weak-coupling regime. {We also construct explicit high-order sequences for suppressing representative 2-local noise models.} {Finally, the same construction extends to suppress slow, time-dependent classical noise and to filter-function design.}
\end{abstract}

\maketitle
\textit{Introduction---} Dynamical decoupling (DD) \cite{viola1999dynamical} is a powerful technique for mitigating decoherence caused by unavoidable system-bath interactions. The central idea is to apply a sequence of fast control pulses to the system, effectively averaging out the coupling between the system and its environment. Due to its effectiveness and low overhead, DD has been extensively explored both theoretically \cite{hahn1950spin,meiboom1958modified,maudsley1986modified,viola2003robust,viola2005random,khodjasteh2005fault,khodjasteh2007performance,uhrig2007keeping,yang2008universality,uhrig2009concatenated,west2010high,west2010near,uhrig2010rigorous,xia2011rigorous,ng2011combining,wang2011protection,jiang2011universal,quiroz2013optimized,bookatz2016improved,genov2017arbitrarily,brown2025efficient,yi2026faster} and experimentally \cite{witzel2007concatenated,biercuk2009experimental,biercuk2009optimized,alvarez2010performance,lange2010universal,ryan2010robust,barthel2010interlaced,ajoy2011optimal,souza2011robust,bylander2011noise,medford2012scaling,zhao2012decoherence,xu2012coherence,farfrunik2015optimizing,pokharel2018demonstration,tripathi2022suppression,ezzell2023dynamical,evert2025syncopated,kasatkin2026quantum} across a wide range of platforms, making it a key ingredient in the implementation of state-of-the-art quantum processors \cite{kim2023scalable,acharya2024quantum,paetznick2024demonstration,bluvstein2025a,vezvaee2025surface,bluvstein2025a}.

Since the 1950s, numerous DD sequences have been proposed to suppress decoherence more effectively. These designs typically aim to increase the cancellation order in the total evolution time $T$, but at the cost of {a larger number of pulses}. For the most general case of single-qubit system-bath coupling, two high-order DD protocols have been developed. The earliest is Concatenated DD (CDD) \cite{khodjasteh2005fault}, whose number of pulses grows exponentially with the cancellation order $K$. This was followed by Quadratic DD (QDD) \cite{west2010near} which achieves the same decoupling order while requiring only a quadratic number of pulses in $K$.

\begin{figure}[t!]
    \centering
    \includegraphics[width=1\linewidth]{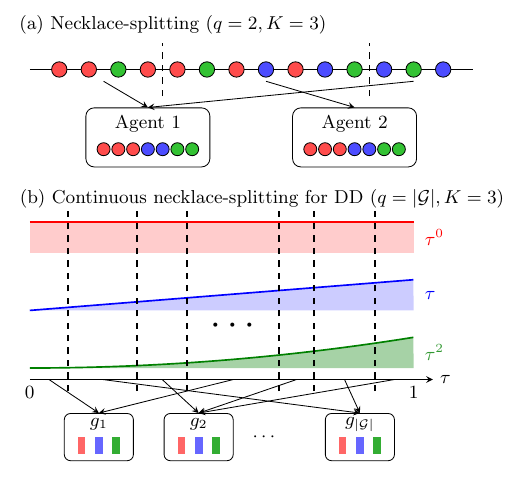}
   \caption{(a) Discrete necklace-splitting example with two agents ($q=2$) and three colors ($K=3$): by choosing two cuts, each agent receives the same number of red, blue, and green beads. (b) Continuous necklace-splitting picture for dynamical decoupling: the first three time-moments $\int_{0}^{1}\tau^{m}d\tau$ ($K=3$ in this example) are partitioned by the cuts and assigned to the $|\mathcal G|$ agents ($q=|\mathcal G|$) so that each agent receives the same moments. In the DD construction, the cuts mark the pulse times, $\mathcal G$ denotes the decoupling group for the given noise model, and the time-moments correspond to the error terms, whose equal distribution ensures that averaging over the group cancels these contributions. The necklace-splitting theorem guarantees that there exist at most $(q-1)K$ such cuts.}
\label{necklace-figure}
\end{figure}

For $n$-qubit systems, both CDD and the multi-qubit generalization of QDD, Nested Uhrig DD (NUDD) \cite{wang2011protection,jiang2011universal}, can suppress general decoherence. However in many physical settings the dominant noise is $k$-local (often with a constant $k$), meaning it affects only a small subset of qubits at a time~\cite{terhal2005fault,aharonov2006fault,ng2011combining,von2020two}. {In this case, first-order DD sequences of length $\mathcal O(n^{k-1})$ were constructed from orthogonal arrays, reflecting noise locality~\cite{bookatz2016improved}.} To date, however, there is no high-order DD construction that exploits the $k$-local structure of the noise. {Although CDD and NUDD apply to $k$-local models, they ignore $k$ and thus substantially overestimate the required pulse count.}

\begin{table*}[t!]
\centering
\renewcommand{\arraystretch}{1.2}
\setlength{\tabcolsep}{9pt}
\begin{tabular}{lccc}
\hline
Scheme & System / noise model & Leading error scaling & Number of pulses \\
\hline
\multicolumn{4}{c}{\textbf{Single-qubit, general noise}} \\[4pt]
CDD \cite{khodjasteh2005fault}
  & 1-qubit, general 
  & $\mathcal O(JT^{K+1}) + \mathcal O(J^2 T^2)$ 
  & $\mathcal O(4^K)$ \\
QDD \cite{west2010near}
  & 1-qubit, general 
  & $\mathcal O\big(JT^{K+1}\big) + \mathcal O\big(J^2T^{K+1}\big)$ 
  & $\mathcal O(K^2)$ \\[2pt]
This work 
  & 1-qubit, general 
  & $\mathcal O(JT^{K+1}) + \mathcal O(J^2 T^2)$ 
  & $\mathcal O(K)$ \\[4pt]
\multicolumn{4}{c}{\textbf{$n$-qubit, general noise}} \\[4pt]
CDD \cite{khodjasteh2005fault}     & $n$ qubits, general 
         & $O(JT^{K+1}) + O(J^2 T^2)$ 
         & $\mathcal O (4^{nK})$ \\[2pt]
NUDD \cite{wang2011protection,jiang2011universal}    & $n$ qubits, general 
         & $\mathcal O(JT^{K+1}) + \mathcal O(J^2 T^{K+1})$ 
         & $\mathcal O(K^{2n})$ \\[2pt]
This work
         & $n$ qubits, general 
         & $\mathcal O(JT^{K+1}) + \mathcal O(J^2 T^2)$ 
         & $\mathcal O(4^{n} K)$ \\[4pt]
\multicolumn{4}{c}{\textbf{$n$-qubit, $k$-local noise}} \\[4pt]
OA-based DD \cite{bookatz2016improved}
         & $n$ qubits, $k$-local 
         & $\mathcal O(JT^{2}) + \mathcal O(J^2 T^{2})$ 
         & $\mathcal O(n^{k-1})$ \\[2pt]
{This work} 
         & $n$ qubits, $k$-local 
         & $\mathcal O(JT^{K+1}) + \mathcal O(J^2 T^2)$ 
         & $\mathcal O(n^{k-1} K)$ \\[4pt]
\multicolumn{4}{c}{{\textbf{$n$ qubits on a graph with chromatic number $\chi$, $2$-local noise}}} \\[4pt]
CHaDD \cite{brown2025efficient}
  & $n$ qubits, $2$-local
  & $\mathcal O(JT^{2}) + \mathcal O(J^2 T^2)$
  & $\mathcal O(\chi)$ \\[2pt]
This work
  & $n$ qubits, $2$-local
  & $\mathcal O(JT^{K+1}) + \mathcal O(J^2 T^2)$
  & $\mathcal O(\chi K)$ \\
\hline
\end{tabular}
\caption{Comparison of representative dynamical decoupling (DD) schemes, showing the noise model, leading error scaling, and pulse-count scaling as functions of cancellation order $K$, system size $n$, noise locality $k$, and, where applicable, the chromatic number $\chi$ of the underlying coupling graph. Abbreviations: CDD = Concatenated DD, QDD = Quadratic DD, NUDD = Nested Uhrig DD, OA = orthogonal-array, {CHaDD = chromatic Hadamard DD}. Our construction cancels all contributions linear in the system-bath coupling strength $J$ up to order $K$ in the total evolution time $T$ using a number of pulses linear in $K$, and applies to any noise model admitting a decoupling group $\mathcal G$. In particular, there exist choices of $\mathcal G$ with size $\mathcal O(1)$ for general single-qubit noise, $\mathcal O(4^n)$ for general $n$-qubit noise, $\mathcal O(n^{k-1})$ for $k$-local noise on $n$ qubits, and $\mathcal O(\chi)$ for $2$-local noise on a graph with chromatic number $\chi$.}
\label{dd-scaling}
\end{table*}

In this Letter, we present a high-order DD construction for arbitrary system-bath interactions in the weak-coupling regime, where the system-bath  interaction strength is  small compared to the intrinsic system and/or the bath energy scales (defined precisely later), a condition that is well motivated in many physical platforms~\cite{breuer2002theory}. This regime is standard in the DD literature, with typical benchmarks assuming  coupling  $10^{-2}$--$10^{-6}$ times the pure-bath energy scale~\cite{khodjasteh2005fault,west2010high,west2010near,uhrig2010rigorous,xia2011rigorous,ng2011combining,quiroz2013optimized,yi2026faster}. DD is also widely used to preserve the system’s intended dynamics while suppressing weaker environmental couplings---for instance, to maintain a desired system evolution to perform robust quantum computation \cite{viola1999universal,lidar2008towards,khodjasteh2008rigorous,west2010high,ng2011combining,quiroz2012high,xu2012coherence,de2013universal} or to preserve the sensing signal \cite{degen2017quantum}.

%Our scheme yields pulse sequences whose length scales linearly with both the cancellation order $K$ and decoupling group size $|\mathcal{G}|$, while cancelling all {error terms that are linear} in the system-bath coupling strength $J$ up to order $K$ in the total evolution time $T$. In the weak-coupling regime, where these first-order in $J$ contributions dominate the error, our sequences (1) achieve the same {leading-order scaling $\mathcal{O}(J T^{K+1})$} as CDD, QDD, and NUDD for single-qubit and general $n$-qubit decoherence, but with substantially fewer pulses, and (2) provide, to our knowledge, the first high-order DD construction for $k$-local noise whose pulse complexity explicitly reflects the locality, scaling as $\mathcal{O}(n^{k-1}K)$. These results are summarized in Table \ref{dd-scaling}. In particular, for single-qubit general decoherence we match the leading error scaling of QDD using only $\mathcal{O}(K)$ pulses, and we further prove that this linear scaling is asymptotically optimal. Outside this weak-coupling regime, however, the residual second-order contribution in our scheme scales only as $J^2 T^2$. Hence, protocols such as QDD and NUDD, which suppress all higher-order terms in $T$ at every order in $J$, can outperform our construction in the strong-noise regime.

{Our construction uses $\mathcal{O}(|\mathcal G|K)$ pulses to cancel all $\mathcal O(J)$ terms through order $K$ in $T$, yielding error $\mathcal O(JT^{K+1})+\mathcal O(J^2T^2)$. In the weak-coupling regime, where the linear-in-$J$ contributions dominate the error, our construction achieves the same leading error scaling as QDD for general single-qubit noise and as CDD and NUDD for general $n$-qubit noise, with fewer pulses. For $k$-local noise, a decoupling group of size $\mathcal O(n^{k-1})$ gives $\mathcal O(n^{k-1}K)$ pulses, providing, to our knowledge, the first high-order DD construction whose pulse count reflects noise locality. These results are summarized in Table~\ref{dd-scaling}. For general single-qubit noise, the $\mathcal O(K)$ scaling is asymptotically optimal. Outside weak coupling, however, QDD and NUDD can perform better because they suppress higher orders in $T$ at every order in $J$.}

We first show that a pulse sequence of length at most $(|\mathcal G|-1)K$ {must exist} by relating the pulse-scheduling task to the \emph{necklace-splitting problem} \cite{alon1987splitting}, introduced in the 1980s and widely studied in fair division and combinatorics. In its simplest form, the necklace-splitting problem asks: given a necklace with beads of several colors, can one cut the necklace into pieces and distribute them among a fixed number of players so that each player receives exactly the same quantity of every color? This fair-division framework provides a natural abstraction for our construction: the different bead colors correspond to different cancellation conditions that the DD sequence must satisfy, while assigning pieces to players corresponds to choosing the appropriate toggling (or pulse) operators (see Fig.~\ref{necklace-figure}). Moreover, by leveraging known efficient approximation algorithms for the necklace-splitting problem \cite{alon2021efficient}, such sequences can be constructed explicitly in {$\mathcal O(\poly(|\mathcal{G}|,K))$ time}. %Finally, we numerically obtain explicit DD pulse sequences of length $3K$ for the 1-local noise model, as well as explicit higher-order DD sequences for 2-local noise on several graphs, and confirm their predicted error scaling through simulations. In the weak-coupling regime, our 1-local sequences can outperform QDD using the same number of pulses, while the $2$-local constructions extend CHaDD to the higher-order setting.
{We numerically construct explicit $3K$-pulse high-order DD sequences for 1- and 2-local noise on several graphs, confirming their predicted error scaling. In weak coupling, the former can outperform QDD at equal pulse count, while the latter extend CHaDD~\cite{brown2025efficient} to higher order.}

{\textit{Setting, control, and moment conditions---}} We consider an $n$-qubit system $S$ coupled to a bath $B$ with
\begin{align}
H(t)  = {H_S \otimes I_B} + I_S \otimes H_B  + H_{SB} + H_{\mathrm{c}}(t) \otimes I_B,
\label{model}
\end{align}
{where $H_S$, $H_B$, $H_{SB}$, and $H_{\mathrm{c}}(t)$ denote the internal system, bath, system-bath interaction, and control Hamiltonians, respectively.} We assume $\|H_0\| = 1$ and $\|H_{SB}\| = J$, where $H_0 := H_S \otimes I_B + I_S \otimes H_B$. More generally, if $\|H_0\| = \beta$, one can recover this normalization by measuring time in units of $\beta^{-1}$.%, equivalently $T\mapsto \beta T$ and $J\mapsto J/\beta$.  

Control is modeled as a sequence of $L$ instantaneous Pauli pulses {$P_1,\ldots,P_L$} applied at interior times $0<t_1<\cdots<t_L<T$, with $t_0:=0$ and $t_{L+1}:=T$. The corresponding control propagator is piecewise constant, $U_c(t)=P_j\cdots P_1$ for $t\in[t_j,t_{j+1})$, with $U_c(0)=I_S$. Additional pulse may be applied at $t=T$ to enforce cyclicity ($U_c(T)=I_S$); since it acts only at the endpoint, it does not affect the switching functions on $[0,T)$.

{Assuming the standard DD condition $[H_0,H_{\mathrm c}(t)]=0$ for all $t$ (see, e.g., Refs.~\cite{viola1999dynamical,khodjasteh2005fault,yi2026faster}), let $U_0(t)$ be the propagator generated by $H_0+H_{\mathrm c}(t)\otimes I_B$,}
\begin{align}\label{U0}
{U_0(t)=U_c(t)e^{-iH_0 t}.}
\end{align}
{Moving to the interaction picture with respect to $U_0(t)$, one may write the full propagator as}
\begin{align}\label{U_interaction}
U(t)=U_0(t)\mathcal{T}\exp\left(-i\int_0^t \widetilde H(t') dt' \right),
\end{align}
where $\widetilde H(t):=U_0^\dagger(t)H_{SB}U_0(t)$.

Using Eq.~\eqref{U0} and expanding $H_{SB}=\sum_\alpha \sigma_\alpha\otimes B_\alpha$ where $\sigma_\alpha$ are (non-identity) $n$-qubit Pauli strings and $B_\alpha$ are Hermitian bath operators, the toggling-frame Hamiltonian is
\begin{align}
\widetilde H(t) = \sum_\alpha e^{iH_0t}\left(U_c^\dagger(t) \sigma_\alpha U_c(t) \otimes B_\alpha\right) e^{-iH_0t}.
\end{align}
Since $U_c(t)$ is a piecewise-constant Pauli operator, for each $\alpha$ we define the switching function $y_\alpha(t)\in\{\pm1\}$ by $U_c^\dagger(t) \sigma_\alpha U_c(t) = y_\alpha(t) \sigma_\alpha$. Hence, 
\begin{align}
\widetilde H(t) = \sum_\alpha y_\alpha(t) e^{iH_0t}\left(\sigma_\alpha \otimes B_\alpha\right)e^{-iH_0t}.
\label{toggling-U0}
\end{align}

{Introduce normalized time $\tau=t/T$ and denote $y_\alpha(\tau):=y_\alpha(T\tau)$. We define the generalized moments
\begin{align}
M_{\alpha,m}:=\int_0^1y_\alpha(\tau)\tau^m d\tau.
\label{MomentsDef}
\end{align}
Expanding the $H_0$ evolution in Eq.~\eqref{toggling-U0} in powers of $t$ shows that the coefficient at order $T^{m+1}$ in the first Magnus term is proportional to $M_{\alpha,m}$. Consequently, imposing
\begin{align}
M_{\alpha,m}=0,
\qquad
m=0,\ldots,K-1,
\label{moment-constraints-compact}
\end{align}
for every Pauli term $\sigma_\alpha$ appearing in $H_{SB}$ eliminates all linear-in-$J$ contributions through order $T^K$. Assuming convergence of the Magnus series, the remaining contributions start at second order in $J$, yielding
\begin{align}
\|U(T)-U_0(T)\| = \mathcal O(JT^{K+1})+\mathcal O(J^2T^2),
\label{target}
\end{align}
with the derivation provided in the End Matter. We call the parameter range in which the linear-in-$J$ contribution dominates the \emph{weak-coupling regime}; in the present units, this corresponds to $J=\mathcal O(T^{K-1})$.}

\textit{Main result---} Let $\mathcal{G} \subset \mathcal{P}_n$ be a finite subgroup of the $n$-qubit Pauli group. We say $\mathcal{G}$ is a \emph{decoupling group} \cite{viola1999dynamical,zanardi1999symmetrizing,yi2026faster} if every element commutes with $H_0$ and the group twirl of $H_{SB}$ vanishes:
\begin{align}
\forall g \in \mathcal G, ~~[H_0, g \otimes I_B] = 0, \quad \Pi_{\mathcal G}(H_{SB}) = 0,
\label{decoupling-group-condition}
\end{align} where $\Pi_{\mathcal G}(X) := \frac{1}{|\mathcal G|}\sum_{g \in \mathcal G} (g^\dagger \otimes I_B) X (g \otimes I_B)$. 

Our central result shows that the order-$K$ moment constraints in Eq.~\eqref{moment-constraints-compact} can always be met using at most $(|\mathcal G|-1)K$ pulses.

\begin{theorem}[Existence of an order-$K$ moment-cancelling sequence with at most $(|\mathcal G|-1)K$ pulses]\label{thm:existence}
{Let $\mathcal G$ be a decoupling group as defined in Eq.~\eqref{decoupling-group-condition}. For any integer $K\ge 1$, there exist piecewise-constant functions $y_\alpha:[0,1]\to\{\pm1\}$, indexed by all Pauli strings $\sigma_\alpha$ appearing in $H_{SB}$, such that:}
\begin{enumerate}
\item\label{cond:i}
there exist (normalized) pulse timings $0 < \tau_1 < \cdots < \tau_L < 1$ with
\begin{align}\label{condition1}
L \le (|\mathcal{G}|-1)K 
\end{align}
for which all $y_\alpha(\tau)$ are constant on each subinterval $[0,\tau_1), [\tau_1,\tau_2), \ldots, [\tau_L,1]$, and 
\item\label{cond:ii} {the order-$K$ moment constraints in Eq.~\eqref{moment-constraints-compact} hold.}
\end{enumerate}
\end{theorem}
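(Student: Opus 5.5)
We will reduce the statement to the continuous necklace-splitting theorem of Alon~\cite{alon1987splitting}. Recall its form: if $\mu_1,\ldots,\mu_K$ are continuous probability measures on $[0,1]$ and $q\ge 2$ is an integer, then $[0,1]$ can be cut at no more than $(q-1)K$ points. The resulting intervals can be distributed among $q$ agents so that every agent receives exactly $1/q$ of each measure. Here we take $q=|\mathcal G|$. For $m=0,\ldots,K-1$ we take $\mu_{m+1}$ to be the measure with density $(m+1)\tau^m\,d\tau$; each of these is a nonatomic probability measure. The theorem then produces cut points $0<\tau_1<\cdots<\tau_L<1$ with $L\le(|\mathcal G|-1)K$ and a labelling $g:\{0,\ldots,L\}\to\mathcal G$ of the intervals. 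The labelling has the property that for every $h\in\mathcal G$ and every $m$,
\begin{align*}
\int_{\{\tau:\, g(\tau)=h\}}\tau^m\,d\tau=\frac{1}{|\mathcal G|}\int_0^1\tau^m\,d\tau .
\end{align*}
Cuts that happen to fall at $0$ or $1$, or that separate two adjacent intervals carrying the same label, are simply discarded, so the bound on $L$ still holds.

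We then define the control frame to be $U_c(\tau)=g(\tau)$, the group element attached to the interval containing $\tau$, up to phase. Each pulse is $P_j=g_j g_{j-1}^{\dagger}$, which lies in $\mathcal G$ and is therefore a Pauli operator commuting with $H_0$. The first interval is required to carry the identity, so that $U_c(0)=I$. This can be arranged because the necklace theorem leaves us free to permute the agents: we relabel them by left-multiplying the labels with $g_0^{-1}$. Left multiplication is a bijection of $\mathcal G$, so every group element is still assigned exactly one agent's share, and the equal-share property survives. For each Pauli $\sigma_\alpha$ in $H_{SB}$ we set $y_\alpha(\tau)=\chi_\alpha(g(\tau))$, where $\chi_\alpha(h)=\pm1$ is defined by $h^\dagger\sigma_\alpha h=\chi_\alpha(h)\sigma_\alpha$. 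Every $y_\alpha$ is then constant on the subintervals, which gives condition~\ref{cond:i}.

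For condition~\ref{cond:ii} we compute
\begin{align*}
M_{\alpha,m}=\sum_{h\in\mathcal G}\chi_\alpha(h)\int_{g^{-1}(h)}\tau^m d\tau=\frac{1}{m+1}\cdot\frac{1}{|\mathcal G|}\sum_{h\in\mathcal G}\chi_\alpha(h).
\end{align*}
It remains to show that $\sum_h\chi_\alpha(h)=0$ for every $\alpha$ that appears in $H_{SB}$. Since each element of $\mathcal G$ is a Pauli operator, it maps each $\sigma_\alpha$ to $\pm\sigma_\alpha$, so the twirl acts diagonally: $\Pi_{\mathcal G}(H_{SB})=\sum_\alpha\big(\frac{1}{|\mathcal G|}\sum_h\chi_\alpha(h)\big)\sigma_\alpha\otimes B_\alpha$. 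The Pauli strings $\sigma_\alpha$ are linearly independent over the bath operators, and each $B_\alpha\ne0$. Hence $\Pi_{\mathcal G}(H_{SB})=0$ forces every coefficient $\frac{1}{|\mathcal G|}\sum_h\chi_\alpha(h)$ to vanish.

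The main obstacle is matching the exact hypotheses of the necklace-splitting theorem. The version we need uses $q$ agents and $K$ continuous measures, and its $(q-1)K$ cut bound is stated for $q$ an arbitrary integer rather than only a prime. Alon's theorem covers general $q$, proved via the prime case together with an induction on the factorisation of $q$, so we would cite it in that form and check its normalisation conventions. A secondary point is the identity-start requirement, which is handled by the relabelling described above.
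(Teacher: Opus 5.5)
Your proposal is correct and follows the same route as the paper. You apply Alon's continuous necklace-splitting theorem with $q=|\mathcal G|$ to the densities $\tau^m$, $m=0,\ldots,K-1$; your normalization $(m+1)\tau^m$ is immaterial. You label the bins by group elements, set $y_\alpha$ equal to the conjugation sign $\chi_\alpha(g)$ on each bin, and use $\Pi_{\mathcal G}(H_{SB})=0$ to get $\sum_{g}\chi_\alpha(g)=0$, so every $M_{\alpha,m}$ vanishes. Your extra steps are details the paper leaves implicit, and all are sound: relabelling by $g_0^{-1}$ so the first segment carries the identity, discarding degenerate cuts, and the explicit linear-independence argument for the vanishing character sums.
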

{The full proof of Theorem~\ref{thm:existence} is provided in the Supplemental Material (SM)~\cite{suppmat}. Here we give the intuition and explain the role of necklace splitting. View normalized time $\tau\in[0,1]$ as a continuous necklace whose point $\tau$ carries $K$ ``colors'' with weights $1,\tau,\ldots,\tau^{K-1}$ (cf.~Fig.\ref{necklace-figure}~(b)). The continuous necklace-splitting theorem~\cite{alon1987splitting} (Lemma~\ref{lem:necklace} in SM) guarantees that with at most $(|\mathcal G|-1)K$ cuts we can distribute the pieces among $|\mathcal G|$ bins labeled by $g\in\mathcal G$ so that every bin receives exactly the same amount of every color. Assigning to each bin the corresponding toggling-frame sign pattern (from conjugation by $g$) makes all low-order time weights identical across bins, and the decoupling-group condition implies these signed contributions sum to zero. Hence all moments up to order $K-1$ vanish.}

Given a labeled partition $0=t_0<t_1<\cdots<t_{L}<t_{L+1}=T$ with segment labels $g_\ell\in\mathcal G$ ($\ell=0,\ldots,L$) and $g_0=I$, the physical pulse sequence is $P_\ell := g_\ell g_{\ell-1}^\dagger$, so that $U_c(t)=g_\ell$ for $t\in[t_\ell,t_{\ell+1})$. 

{Theorem~\ref{thm:existence} only guarantees that such a pulse-efficient sequence exists. For any $\varepsilon>0$, however, a deterministic algorithm constructs an approximate sequence in $\mathcal O\left(\poly(|\mathcal G|,K,\log(1/\varepsilon))\right)$ time, using $\mathcal O\left(|\mathcal G|K\log(|\mathcal G|/\varepsilon)\right)$ pulses and satisfying $|M_{\alpha,m}|\le\varepsilon/(m+1)$ for every $\alpha$ and $m<K$; see the End Matter.}

\begin{figure}[t!]
  \centering
  \includegraphics[width=8.6cm]{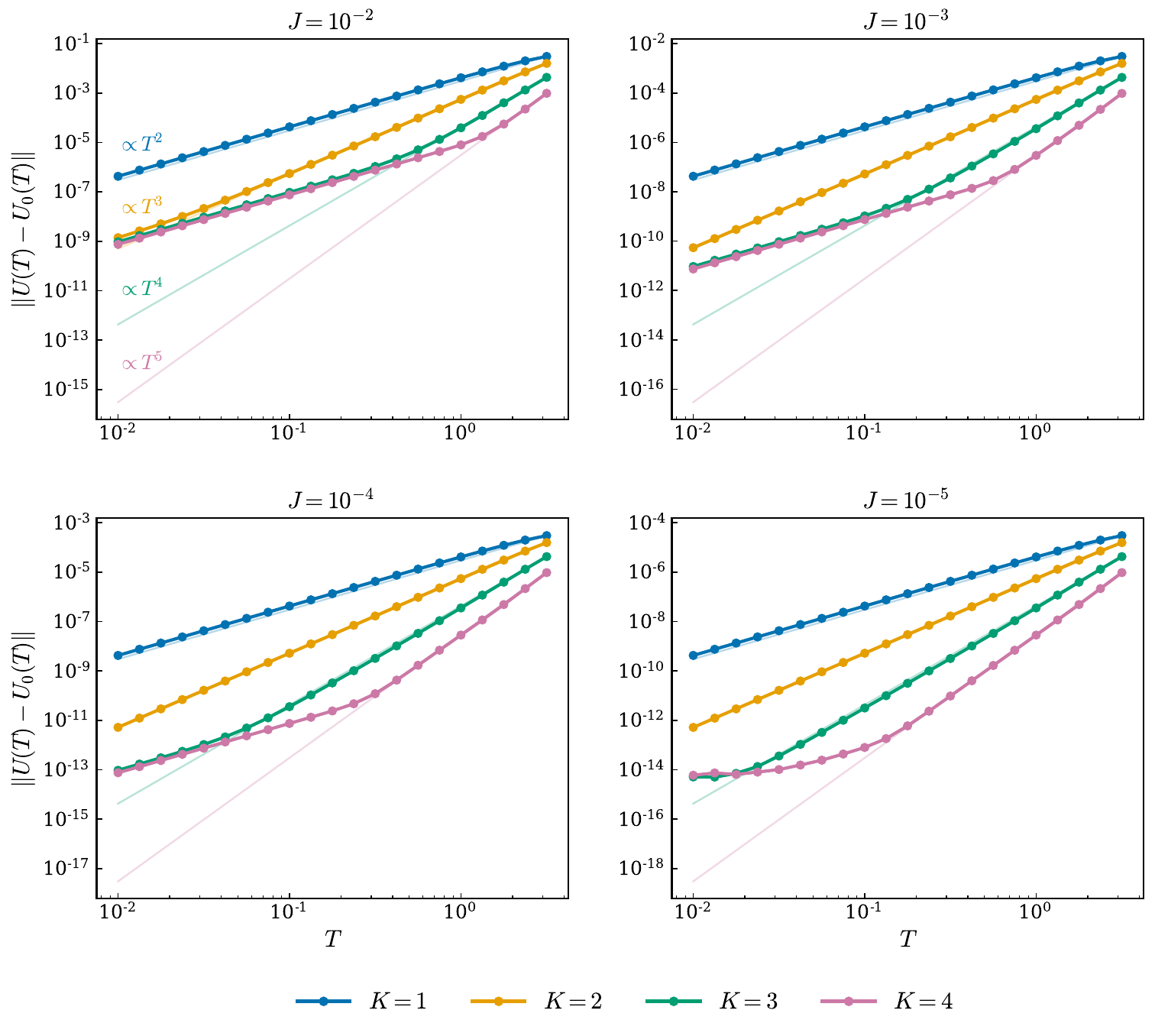}
  \caption{{Errors for optimized DD sequences at different coupling strengths. Here $\|H_0\|=1$, and $T$ denotes the total evolution time. As $J$ decreases, the error increasingly follows the predicted leading behavior $\mathcal{O}(JT^{K+1})$ for the order-$K$ sequence. The observed crossover reflects the competition between the two terms in \eqref{target}: for sufficiently small $T$, the $\mathcal{O}(J^2T^2)$ contribution can dominate, giving an apparent slope of $2$, whereas for larger $T$ the $\mathcal{O}(JT^{K+1})$ term becomes dominant. The crossover occurs around $T\sim J^{1/(K-1)}$, where the two contributions become comparable. In this larger-$T$ regime, the higher-order scaling is confirmed by the fact that the curves become parallel to the shaded guide lines, which indicate the power-law scalings $T^2,T^3,T^4,$ and $T^5$.}}
\label{fig:Jcomparison}
\end{figure}

\textit{Implication: $k$-local noise model---} Suppose that $H_{SB}$ is $k$-local, in the sense that every Pauli term $\sigma_{\alpha}$ appearing in $H_{SB}$ has weight at most $k$. In this setting, {one can construct a decoupling group} $\mathcal G$ of size $|\mathcal G| = \mathcal{O}(n^{k-1})$ \cite{bookatz2016improved}. Theorem~\ref{thm:existence} thus guarantees the existence of pulse sequences with length $L = \mathcal{O}(n^{k-1} K)$, for which $M_{\alpha,m} = 0$ for all $m = 0,1,\dots,K-1$ and all Pauli strings $\sigma_\alpha$ of weight at most $k$. To the best of our knowledge, this is the first construction of \text{high-order} DD sequences whose pulse count explicitly reflects the locality of the noise, in contrast to generic high-order schemes such as CDD \cite{khodjasteh2005fault} and NUDD \cite{jiang2011universal}, whose cost does not exploit $k$-local structure. For $k=\mathcal{O}(1)$, our DD uses $\mathcal O(\poly(n))$ pulses, whereas CDD and NUDD use $\mathcal O(\exp(n))$ pulses; thus exploiting locality yields an exponential improvement.

For $1$-local noise, the best-known sequence is QDD \cite{west2010near}, which achieves error suppression of order $\mathcal{O}(JT^{K+1}) + \mathcal{O}(J^2 T^{K+1})$ using $\mathcal{O}(K^2)$ pulses. Consequently, in the weak-coupling regime where the $\mathcal{O}(J)$ term dominates, our construction that uses only $\mathcal O (K)$ pulses provides a quadratic reduction in the required number of pulses. {Moreover, this linear scaling in $K$ is optimal: any DD sequence satisfying $M_{\alpha,m}=0$ for all $\alpha$ and $m=0,\ldots,K-1$ must use at least $K$ pulses.}

{\begin{lemma}[$\Omega(K)$ lower bound]\label{lem:lowerbound}
Let $y:[0,1]\to\{\pm1\}$ be piecewise constant with $r$ sign changes on $(0,1)$. If 
\begin{align}\label{lemma_condition}
\int_0^1 y(\tau) \tau^m d\tau=0\qquad (m=0,1,\ldots,K-1),
\end{align}
then necessarily $r\ge K$.
\end{lemma}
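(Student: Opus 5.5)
The plan is a proof by contradiction using a polynomial that changes sign exactly where $y$ does. Suppose $y$ has $r<K$ sign changes, located at $0<s_1<\cdots<s_r<1$. Consider the polynomial
\begin{align*}
p(\tau):=\prod_{j=1}^{r}(\tau-s_j),
\end{align*}
which has degree $r\le K-1$. For $r=0$ take $p\equiv 1$.

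First I will check that $p$ is in the span of the monomials $1,\tau,\ldots,\tau^{K-1}$. By linearity, the hypothesis Eq.~\eqref{lemma_condition} then gives
\begin{align*}
\int_0^1 y(\tau)\,p(\tau)\,d\tau=0 .
\end{align*}

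Second, I will show that $y(\tau)p(\tau)$ has constant sign on $[0,1]$. Both factors are nonzero on each open interval $(s_j,s_{j+1})$, with the conventions $s_0=0$ and $s_{r+1}=1$. Crossing any $s_j$, both $y$ and $p$ flip sign: $y$ by definition of a sign change, and $p$ because $s_j$ is a simple root. Hence the product $yp$ keeps a fixed sign, say $\epsilon\in\{\pm1\}$, on every subinterval. It vanishes only at the finitely many points $s_j$.

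Combining the two steps, $\epsilon\int_0^1 y p\,d\tau=\int_0^1|p|\,d\tau>0$, because $p$ is a nonzero polynomial. This contradicts $\int_0^1 yp\,d\tau=0$, so $r\ge K$.

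The argument is essentially the classical fact that a function orthogonal to all polynomials of degree $<K$ must change sign at least $K$ times. The only point needing care is the definition of "sign change": points where $y$ is redefined on a measure-zero set must not be counted, so I will take $s_j$ to be the genuine discontinuities at which $y$ switches between $+1$ and $-1$. With that settled, no step is a real obstacle. The one detail to verify is that all roots of $p$ are simple and interior, so that the sign of $p$ flips exactly at each $s_j$; this holds by construction.
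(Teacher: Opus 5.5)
Your proof is correct, but it takes a different route from the paper's.

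The paper does not test against $\prod_j(\tau-\tau_j)$ itself. It takes $P(\tau)=a\,\tau\prod_{j=1}^r(\tau-\tau_j)$ of degree $r+1$, normalized so that $P(1)=s_r$. It then uses $P'$ (degree $\le r$) as the test polynomial and evaluates $\int_0^1 yP'\,d\tau$ exactly by summation by parts. The sum $\sum_k s_k\bigl(P(\tau_{k+1})-P(\tau_k)\bigr)$ collapses to $s_rP(1)=1$, because $P$ vanishes at $0$ and at every breakpoint.

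That computation needs no positivity or sign-matching argument. It also never uses that consecutive values $s_k$ alternate. It only needs $y$ to be a step function with breakpoints among the $\tau_j$ and $s_r\neq 0$, so it really bounds the number of jumps of an arbitrary step function, and it gives an explicit nonzero value for the obstructing integral.

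Your argument is the classical sign-matching (Chebyshev-system) argument. It relies essentially on the alternation of $y$, through the constant sign of $yp$ and $\int_0^1|p|\,d\tau>0$. It does not rely on the piecewise-constant structure. It therefore extends verbatim to any real $y$ that is not a.e.\ zero and changes sign at most $r$ times, and to any positive weight in place of $d\tau$.

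For the $\pm1$-valued $y$ in the lemma, both arguments are equally short and give the same bound $r\ge K$. Yours is the more standard and slightly more direct one.
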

The proof of Lemma~\ref{lem:lowerbound} is provided in the SM. Since single-axis noise is a special case of general noise, the same lower bound applies here, and thus the $\mathcal{O}(K)$ pulse scaling is optimal.}

\textit{Extension to other noise models---}
%Environmental fluctuations can also be modeled as time-dependent Hamiltonian noise and suppressed by DD \cite{green2013arbitrary,bylander2011noise,alvarez2010performance,souza2011robust,ramon2022qubit}. For a single qubit with $H(t)=\sum_{\alpha\in\{x,y,z\}}\beta_\alpha(t)\sigma_\alpha+H_c(t)$, where $\beta_\alpha(t)$ are unknown {$K$-times differentiable} noise amplitudes and $H_c(t)$ generates Pauli pulses, the leading toggling-frame Magnus term is determined by the generalized moments $M_{\alpha,m}$. Hence, imposing the moment constraints in \eqref{moment-constraints-compact} cancels the leading error through order $K$. In the slow-noise regime with cutoff $\omega_c$, this yields $\mathcal O\big(\beta_{\max}T(\omega_cT)^K\big)+\mathcal O(\beta_{\max}^2T^2)$ error, where $\beta_{\max}:=\sup_{t\in[0,T]}\sum_\alpha |\beta_\alpha(t)|$. 
{The same moment conditions apply to slowly varying classical Hamiltonian noise $H_n(t)=\sum_{\alpha\in\{x,y,z\}}\beta_\alpha(t)\sigma_\alpha$ ~\cite{green2013arbitrary,bylander2011noise}. For $K$-times differentiable noise with effective cutoff frequency $\omega_c$, they yield error $\mathcal O\left(\beta_{\max}T(\omega_cT)^K\right) +\mathcal O(\beta_{\max}^2T^2)$, where $\beta_{\max}:=\sup_t\sum_\alpha|\beta_\alpha(t)|$.}

{Our construction requires bounded $H_{SB}$ and therefore does not directly apply} to unbounded baths such as bosonic baths. Such settings are often treated using filter-function methods \cite{cywinski2008how,uhrig2008exact,paz2014general,clausen2012task,gordon2007universal,kofman2004unified}. Within this framework, we show that the same constraints enforce a $K$th-order zero of the control Fourier amplitudes at low frequency, and hence a $2K$th-order suppression of the associated quadratic filter functions. This provides a frequency-domain connection to unbounded environments, including the spin-boson model discussed in the SM.

\begin{figure}[t!]
  \centering
  \includegraphics[width=8.6cm]{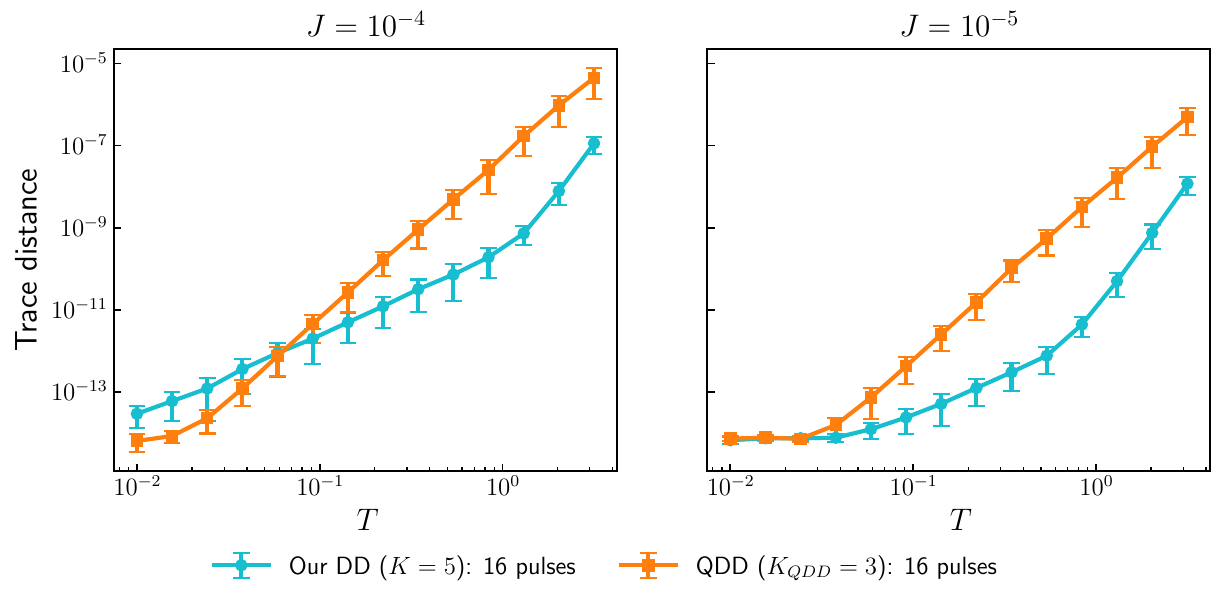}
  \caption{Comparison between our DD sequences and QDD for the single-qubit general decoherence model. We plot the average trace distance between the actual and ideal reduced system states versus $T$ for several weak coupling strengths $J$, comparing our $K=5$ sequence with $K_{\rm QDD}=3$ QDD, each using $16$ pulses. Each point is averaged over $100$ random product initial states.}
  \label{fig:QDD_comparison_main}
\end{figure}

\textit{Numerical simulation---} {We numerically test our construction for a single qubit coupled to a single-qubit bath, with $H_{SB}=\sum_{\alpha\in\{x,y,z\}}\sigma_\alpha\otimes B_\alpha$, $B_\alpha=\sum_{\mu\in\{x,y,z\}}c_{B_\alpha,\mu}\sigma_\mu$, and $H_B=\sum_{\alpha\in\{x,y,z\}}c_\alpha\sigma_\alpha$. All coefficients are chosen in $[0,1]$, and we normalize $\|H_B\|=1$. For $\mathcal G=\{I,X,Y,Z\}$, Theorem~\ref{thm:existence} guarantees a sequence with at most $3K$ interior pulses (one additional pulse at $t=T$ may be used to enforce cyclicity). Because the deterministic algorithm achieves $|M_{\alpha,m}|\le\varepsilon/(m+1)$ using $\mathcal O(K\log(1/\varepsilon))$ pulses, its logarithmic overhead can make the pulse count substantially larger than the existence bound $3K$ for the small values of $\varepsilon$ relevant to DD. We therefore fix the alternating pattern $X\to Z\to X\to Z\to\cdots$ with $3K$ pulses and optimize its $3K+1$ free-evolution intervals directly. The numerical procedure and optimized schedules are provided in the SM. For $K=1$, this procedure recovers XY4~\cite{maudsley1986modified}; for $K>1$, it produces new sequences.}

{For $J\in[10^{-5},10^{-2}]$, we evaluate $\|U(T)-U_0(T)\|$ using the numerically optimized schedules. As shown in Fig.~\ref{fig:Jcomparison}, the curves approach the predicted $\mathcal O(JT^{K+1})$ scaling as $J$ decreases. For $K>1$, the crossover from slope $2$ at short times to slope $K+1$ at larger times agrees with $\mathcal O(JT^{K+1})+\mathcal O(J^2T^2)$ in Eq.~\eqref{target}.}

%Following prior DD studies~\cite{khodjasteh2005fault,west2010high,west2010near,uhrig2010rigorous,xia2011rigorous,ng2011combining,quiroz2013optimized,yi2026faster}, we take $J\in[10^{-2}, 10^{-5}]$. For each $J$, we plot the error $\|U(T)-U_0(T)\|$ versus $T$ using the numerically optimized schedules (see SM). The results are shown in Fig.~\ref{fig:Jcomparison}. As $J$ decreases, the scaling approaches the expected order-$K$ behavior (shaded guides). A crossover from slope $K$ to $2$ is observed, consistent with the prediction $\mathcal{O}(JT^{K+1})+\mathcal{O}(J^2T^2)$ in Eq.~\eqref{target}.

%We next compare our sequence to QDD using the same number of pulses. Specifically, we benchmark our $K=5$ DD against QDD with $K_{\rm QDD}=3$, both of which use $16$ pulses. For the same Hamiltonian model, we sample 100 random product initial states and plot the average trace distance between the reduced system state and the ideal reduced state for various values of $J$. As $J$ decreases, the advantage of our construction becomes more evident, as expected. The improvement is also more visible at larger $T$, where higher-order cancellation in $T$ becomes apparent. (See SM for additional simulations.)

{We next compare our sequence with QDD at equal pulse count.
Our $K=5$ sequence and $K_{\rm QDD}=3$ QDD both use $16$ pulses. Figure~\ref{fig:QDD_comparison_main} shows the trace distance from the ideal reduced state, averaged over $100$ random product initial states. The advantage of our sequence becomes more evident as $J$ decreases and at larger $T$, where its higher-order cancellation becomes visible. Additional comparisons are provided in the SM. We also extend CHaDD to suppress arbitrary $2$-local noise on graphs, obtaining high-order sequences with $L=\mathcal O(\chi K)$. An explicit $3\times3$-grid construction and numerical verification are given in the End Matter.}

%Finally, we consider a $9$-qubit system on a $3\times3$ open grid with single-axis couplings $H_{SB}=\sum_{v\in V} Z_v\otimes B_v+\sum_{(u,v)\in E} Z_uZ_v\otimes B_{uv}$ where $(u,v)$ runs over nearest-neighbor pairs on the grid. For this bipartite graph, CHaDD~\cite{brown2025efficient} achieves first-order decoupling by alternating collective $X$ pulses on sublattices $A$ and $B$. Applying Theorem~\ref{thm:existence} yields $L\le(2\chi-1)K=3K$ for $\chi=2$. The optimized sequence from the single-qubit case can be reused directly under $X\mapsto X_A$ and $Z\mapsto X_B$, giving a $3K(+1)$-pulse sequence that suppresses the leading error to order $K$, consistent with the numerical results in Fig.~\ref{fig:chadd_3x3}. More generally, arbitrary $2$-local noise on a graph of chromatic number $\chi$ admits high-order CHaDD sequences with $L=\mathcal O(\chi K)$, as discussed in the SM.

\textit{Conclusion---}
We introduced a high-order DD construction that suppresses arbitrary system-bath noise using only $\mathcal{O}(|\mathcal G|K)$ pulses by mapping the DD design problem to the necklace-splitting problem. In the weak-coupling regime where the $\mathcal{O}(J)$ term dominates, our sequence matches the error scaling of state-of-the-art DD schemes with fewer pulses, which can yield an exponential reduction in pulse counts for local noise. We also provide explicit sequences of length $3K$ and numerically demonstrate both improved performance over QDD and the realization of higher-order CHaDD.

Our approach provides a practical route to extending qubit coherence times and may serve as a foundation for more resource-efficient pulse-sequence implementations in tasks such as quantum sensing, noise spectroscopy, and quantum simulation. Looking ahead, it would be exciting to implement the deterministic algorithm to generate and benchmark new sequences for large systems. {Another important direction is to make these pulse sequence robust to finite pulse widths and various control imperfections.} More broadly, we hope that our work inspires further connections between quantum control and concrete mathematical problems, leading to more efficient and elegant protocol designs.

\textit{Acknowledgments---}  This material is based upon work supported by the U.S. Department of Energy, Office of Science, Advanced Scientific Computing Research (ASCR), Express: 2023 Exploratory Research For Extreme-scale Science Program under Award Number DE-SC0024685. Additional support by the DOE Early Career Research Award No. DE-SC0026373 is acknowledged.

\textit{Data Availability---} The data that support the findings of this article are openly available \cite{kim2026high}.

\bibliography{ref}

@article{viola1999dynamical,
  title = {Dynamical Decoupling of Open Quantum Systems},
  author = {Viola, Lorenza and Knill, Emanuel and Lloyd, Seth},
  journal = {Phys. Rev. Lett.},
  volume = {82},
  issue = {12},
  pages = {2417--2421},
  numpages = {0},
  year = {1999},
  month = {Mar},
  publisher = {American Physical Society},
  doi = {10.1103/PhysRevLett.82.2417},
  url = {https://link.aps.org/doi/10.1103/PhysRevLett.82.2417}
}

@article{bylander2011noise,
	author = {Bylander, Jonas and Gustavsson, Simon and Yan, Fei and Yoshihara, Fumiki and Harrabi, Khalil and Fitch, George and Cory, David G. and Nakamura, Yasunobu and Tsai, Jaw-Shen and Oliver, William D.},
	da = {2011/07/01},
	doi = {10.1038/nphys1994},
	id = {Bylander2011},
	isbn = {1745-2481},
	journal = {Nat. Phys.},
	number = {7},
	pages = {565--570},
	title = {Noise spectroscopy through dynamical decoupling with a superconducting flux qubit},
	ty = {JOUR},
	url = {https://doi.org/10.1038/nphys1994},
	volume = {7},
	year = {2011}}

@article{xu2012coherence,
  title = {Coherence-Protected Quantum Gate by Continuous Dynamical Decoupling in Diamond},
  author = {Xu, Xiangkun and Wang, Zixiang and Duan, Changkui and Huang, Pu and Wang, Pengfei and Wang, Ya and Xu, Nanyang and Kong, Xi and Shi, Fazhan and Rong, Xing and Du, Jiangfeng},
  journal = {Phys. Rev. Lett.},
  volume = {109},
  issue = {7},
  pages = {070502},
  numpages = {5},
  year = {2012},
  month = {Aug},
  publisher = {American Physical Society},
  doi = {10.1103/PhysRevLett.109.070502},
  url = {https://link.aps.org/doi/10.1103/PhysRevLett.109.070502}
}

@article{pokharel2018demonstration,
  title = {Demonstration of Fidelity Improvement Using Dynamical Decoupling with Superconducting Qubits},
  author = {Pokharel, Bibek and Anand, Namit and Fortman, Benjamin and Lidar, Daniel A.},
  journal = {Phys. Rev. Lett.},
  volume = {121},
  issue = {22},
  pages = {220502},
  numpages = {6},
  year = {2018},
  month = {Nov},
  publisher = {American Physical Society},
  doi = {10.1103/PhysRevLett.121.220502},
  url = {https://link.aps.org/doi/10.1103/PhysRevLett.121.220502}
}

@article{tripathi2022suppression,
  title = {Suppression of Crosstalk in Superconducting Qubits Using Dynamical Decoupling},
  author = {Tripathi, Vinay and Chen, Huo and Khezri, Mostafa and Yip, Ka-Wa and Levenson-Falk, E.M. and Lidar, Daniel A.},
  journal = {Phys. Rev. Appl.},
  volume = {18},
  issue = {2},
  pages = {024068},
  numpages = {24},
  year = {2022},
  month = {Aug},
  publisher = {American Physical Society},
  doi = {10.1103/PhysRevApplied.18.024068},
  url = {https://link.aps.org/doi/10.1103/PhysRevApplied.18.024068}
}

@misc{vezvaee2025surface,
      title={Surface code scaling on heavy-hex superconducting quantum processors}, 
      author={Arian Vezvaee and Cesar Benito and Mario Morford-Oberst and Alejandro Bermudez and Daniel A. Lidar},
      year={2025},
      eprint={2510.18847},
      archivePrefix={arXiv},
      primaryClass={quant-ph},
      url={https://arxiv.org/abs/2510.18847}, 
}

@article{magnus1954on,
	author = {Magnus, Wilhelm},
	doi = {https://doi.org/10.1002/cpa.3160070404},
	journal = {Communications on Pure and Applied Mathematics},
	number = {4},
	pages = {649-673},
	title = {On the exponential solution of differential equations for a linear operator},
	url = {https://onlinelibrary.wiley.com/doi/abs/10.1002/cpa.3160070404},
	volume = {7},
	year = {1954},}

@article{evert2025syncopated,
  title = {Syncopated dynamical decoupling to suppress crosstalk in quantum circuits},
  author = {Evert, Bram and Gonzalez Izquierdo, Zoe and Sud, James and Hu, Hong-Ye and Grabbe, Shon and Rieffel, Eleanor G. and Reagor, Matthew J. and Wang, Zhihui},
  journal = {Phys. Rev. Appl.},
  volume = {24},
  issue = {4},
  pages = {044025},
  numpages = {15},
  year = {2025},
  month = {Oct},
  publisher = {American Physical Society},
  doi = {10.1103/8lxc-lvv1},
  url = {https://link.aps.org/doi/10.1103/8lxc-lvv1}
}

@article{blanes2009the,
   title={The Magnus expansion and some of its applications},
   volume={470},
   ISSN={0370-1573},
   url={http://dx.doi.org/10.1016/j.physrep.2008.11.001},
   DOI={10.1016/j.physrep.2008.11.001},
   number={5–6},
   journal={Physics Reports},
   publisher={Elsevier BV},
   author={Blanes, S. and Casas, F. and Oteo, J.A. and Ros, J.},
   year={2009},
   month=jan, pages={151–238} }

@article{ezzell2023dynamical,
  title = {Dynamical decoupling for superconducting qubits: A performance survey},
  author = {Ezzell, Nic and Pokharel, Bibek and Tewala, Lina and Quiroz, Gregory and Lidar, Daniel A.},
  journal = {Phys. Rev. Appl.},
  volume = {20},
  issue = {6},
  pages = {064027},
  numpages = {42},
  year = {2023},
  month = {Dec},
  publisher = {American Physical Society},
  doi = {10.1103/PhysRevApplied.20.064027},
  url = {https://link.aps.org/doi/10.1103/PhysRevApplied.20.064027}
}

@article{farfrunik2015optimizing,
  title = {Optimizing a dynamical decoupling protocol for solid-state electronic spin ensembles in diamond},
  author = {Farfurnik, D. and Jarmola, A. and Pham, L. M. and Wang, Z. H. and Dobrovitski, V. V. and Walsworth, R. L. and Budker, D. and Bar-Gill, N.},
  journal = {Phys. Rev. B},
  volume = {92},
  issue = {6},
  pages = {060301},
  numpages = {5},
  year = {2015},
  month = {Aug},
  publisher = {American Physical Society},
  doi = {10.1103/PhysRevB.92.060301},
  url = {https://link.aps.org/doi/10.1103/PhysRevB.92.060301}
}

@article{biercuk2009optimized,
	author = {Biercuk, Michael J. and Uys, Hermann and VanDevender, Aaron P. and Shiga, Nobuyasu and Itano, Wayne M. and Bollinger, John J.},
	da = {2009/04/01},
	doi = {10.1038/nature07951},
	id = {Biercuk2009},
	isbn = {1476-4687},
	journal = {Nature},
	number = {7241},
	pages = {996--1000},
	title = {Optimized dynamical decoupling in a model quantum memory},
	ty = {JOUR},
	url = {https://doi.org/10.1038/nature07951},
	volume = {458},
	year = {2009}}

@article{
lange2010universal,
author = {G. de Lange  and Z. H. Wang  and D. Ristè  and V. V. Dobrovitski  and R. Hanson },
title = {Universal Dynamical Decoupling of a Single Solid-State Spin from a Spin Bath},
journal = {Science},
volume = {330},
number = {6000},
pages = {60-63},
year = {2010},
doi = {10.1126/science.1192739}}

@article{acharya2024quantum,
	author = {Acharya, Rajeev and Abanin, Dmitry A. and Aghababaie-Beni, Laleh and Aleiner, Igor and Andersen, Trond I. and others},
	da = {2025/02/01},
	doi = {10.1038/s41586-024-08449-y},
	id = {Acharya2025},
	isbn = {1476-4687},
	journal = {Nature},
	number = {8052},
	pages = {920--926},
	title = {Quantum error correction below the surface code threshold},
	ty = {JOUR},
	url = {https://doi.org/10.1038/s41586-024-08449-y},
	volume = {638},
	year = {2025}}

@article{kim2023scalable,
	author = {Kim, Youngseok and Wood, Christopher J. and Yoder, Theodore J. and Merkel, Seth T. and Gambetta, Jay M. and Temme, Kristan and Kandala, Abhinav},
	da = {2023/05/01},
	doi = {10.1038/s41567-022-01914-3},
	id = {Kim2023},
	isbn = {1745-2481},
	journal = {Nat. Phys.},
	number = {5},
	pages = {752--759},
	title = {Scalable error mitigation for noisy quantum circuits produces competitive expectation values},
	ty = {JOUR},
	url = {https://doi.org/10.1038/s41567-022-01914-3},
	volume = {19},
	year = {2023}}

@misc{paetznick2024demonstration,
      title={Demonstration of logical qubits and repeated error correction with better-than-physical error rates}, 
      author={A. Paetznick and M. P. da Silva and C. Ryan-Anderson and J. M. Bello-Rivas and J. P. Campora III and A. Chernoguzov and J. M. Dreiling and C. Foltz and F. Frachon and J. P. Gaebler and T. M. Gatterman and L. Grans-Samuelsson and D. Gresh and D. Hayes and N. Hewitt and C. Holliman and C. V. Horst and J. Johansen and D. Lucchetti and Y. Matsuoka and M. Mills and S. A. Moses and B. Neyenhuis and A. Paz and J. Pino and P. Siegfried and A. Sundaram and D. Tom and S. J. Wernli and M. Zanner and R. P. Stutz and K. M. Svore},
      year={2024},
      eprint={2404.02280},
      archivePrefix={arXiv},
      primaryClass={quant-ph},
      url={https://arxiv.org/abs/2404.02280}, 
}

@article{uhrig2007keeping,
  title = {Keeping a Quantum Bit Alive by Optimized $\ensuremath{\pi}$-Pulse Sequences},
  author = {Uhrig, G\"otz S.},
  journal = {Phys. Rev. Lett.},
  volume = {98},
  issue = {10},
  pages = {100504},
  numpages = {4},
  year = {2007},
  month = {Mar},
  publisher = {American Physical Society},
  doi = {10.1103/PhysRevLett.98.100504},
  url = {https://link.aps.org/doi/10.1103/PhysRevLett.98.100504}
}

@article{khodjasteh2005fault,
  title = {Fault-Tolerant Quantum Dynamical Decoupling},
  author = {Khodjasteh, K. and Lidar, D. A.},
  journal = {Phys. Rev. Lett.},
  volume = {95},
  issue = {18},
  pages = {180501},
  numpages = {4},
  year = {2005},
  month = {Oct},
  publisher = {American Physical Society},
  doi = {10.1103/PhysRevLett.95.180501},
  url = {https://link.aps.org/doi/10.1103/PhysRevLett.95.180501}
}

@article{bluvstein2025a,
   title={A fault-tolerant neutral-atom architecture for universal quantum computation},
   volume={649},
   ISSN={1476-4687},
   url={http://dx.doi.org/10.1038/s41586-025-09848-5},
   DOI={10.1038/s41586-025-09848-5},
   number={8095},
   journal={Nature},
   publisher={Springer Science and Business Media LLC},
   author={Bluvstein, Dolev and Geim, Alexandra A. and Li, Sophie H. and Evered, Simon J. and Bonilla Ataides, J. Pablo and Baranes, Gefen and Gu, Andi and Manovitz, Tom and Xu, Muqing and Kalinowski, Marcin and Majidy, Shayan and Kokail, Christian and Maskara, Nishad and Trapp, Elias C. and Stewart, Luke M. and Hollerith, Simon and Zhou, Hengyun and Gullans, Michael J. and Yelin, Susanne F. and Greiner, Markus and Vuletić, Vladan and Cain, Madelyn and Lukin, Mikhail D.},
   year={2025},
   month=Nov, pages={39–46} }

@article{viola2003robust,
  title = {Robust Dynamical Decoupling of Quantum Systems with Bounded Controls},
  author = {Viola, Lorenza and Knill, Emanuel},
  journal = {Phys. Rev. Lett.},
  volume = {90},
  issue = {3},
  pages = {037901},
  numpages = {4},
  year = {2003},
  month = {Jan},
  publisher = {American Physical Society},
  doi = {10.1103/PhysRevLett.90.037901},
  url = {https://link.aps.org/doi/10.1103/PhysRevLett.90.037901}
}

@article{zanardi1999symmetrizing,
   title={Symmetrizing evolutions},
   volume={258},
   ISSN={0375-9601},
   url={http://dx.doi.org/10.1016/S0375-9601(99)00365-5},
   DOI={10.1016/s0375-9601(99)00365-5},
   number={2–3},
   journal={Physics Letters A},
   publisher={Elsevier BV},
   author={Zanardi, Paolo},
   year={1999},
   month=jul, pages={77–82} }

@article{yi2026faster,
  title = {Faster Randomized Dynamical Decoupling},
  author = {Yi, Changhao and Kim, Leeseok and Marvian, Milad},
  journal = {Phys. Rev. Lett.},
  volume = {136},
  issue = {1},
  pages = {010601},
  numpages = {7},
  year = {2026},
  month = {Jan},
  publisher = {American Physical Society},
  doi = {10.1103/fk7j-y1vl},
  url = {https://link.aps.org/doi/10.1103/fk7j-y1vl}
}

@article{west2010near,
  title = {Near-Optimal Dynamical Decoupling of a Qubit},
  author = {West, Jacob R. and Fong, Bryan H. and Lidar, Daniel A.},
  journal = {Phys. Rev. Lett.},
  volume = {104},
  issue = {13},
  pages = {130501},
  numpages = {4},
  year = {2010},
  month = {Apr},
  publisher = {American Physical Society},
  doi = {10.1103/PhysRevLett.104.130501},
  url = {https://link.aps.org/doi/10.1103/PhysRevLett.104.130501}
}

@article{uhrig2010rigorous,
  title = {Rigorous bounds for optimal dynamical decoupling},
  author = {Uhrig, G\"otz S. and Lidar, Daniel A.},
  journal = {Phys. Rev. A},
  volume = {82},
  issue = {1},
  pages = {012301},
  numpages = {9},
  year = {2010},
  month = {Jul},
  publisher = {American Physical Society},
  doi = {10.1103/PhysRevA.82.012301},
  url = {https://link.aps.org/doi/10.1103/PhysRevA.82.012301}
}

@article{xia2011rigorous,
  title = {Rigorous performance bounds for quadratic and nested dynamical decoupling},
  author = {Xia, Yuhou and Uhrig, G\"otz S. and Lidar, Daniel A.},
  journal = {Phys. Rev. A},
  volume = {84},
  issue = {6},
  pages = {062332},
  numpages = {11},
  year = {2011},
  month = {Dec},
  publisher = {American Physical Society},
  doi = {10.1103/PhysRevA.84.062332},
  url = {https://link.aps.org/doi/10.1103/PhysRevA.84.062332}
}

@article{yang2008universality,
  title = {Universality of Uhrig Dynamical Decoupling for Suppressing Qubit Pure Dephasing and Relaxation},
  author = {Yang, Wen and Liu, Ren-Bao},
  journal = {Phys. Rev. Lett.},
  volume = {101},
  issue = {18},
  pages = {180403},
  numpages = {4},
  year = {2008},
  month = {Oct},
  publisher = {American Physical Society},
  doi = {10.1103/PhysRevLett.101.180403},
  url = {https://link.aps.org/doi/10.1103/PhysRevLett.101.180403}
}

@article{wang2011protection,
  title = {Protection of quantum systems by nested dynamical decoupling},
  author = {Wang, Zhen-Yu and Liu, Ren-Bao},
  journal = {Phys. Rev. A},
  volume = {83},
  issue = {2},
  pages = {022306},
  numpages = {10},
  year = {2011},
  month = {Feb},
  publisher = {American Physical Society},
  doi = {10.1103/PhysRevA.83.022306},
  url = {https://link.aps.org/doi/10.1103/PhysRevA.83.022306}
}

@misc{filosratsikas2018hardness,
      title={Hardness Results for Consensus-Halving}, 
      author={Aris Filos-Ratsikas and Soren Kristoffer Stiil Frederiksen and Paul W. Goldberg and Jie Zhang},
      year={2018},
      eprint={1609.05136},
      archivePrefix={arXiv},
      primaryClass={cs.GT},
      url={https://arxiv.org/abs/1609.05136}, 
}

@article{jiang2011universal,
  title = {Universal dynamical decoupling of multiqubit states from environment},
  author = {Jiang, Liang and Imambekov, Adilet},
  journal = {Phys. Rev. A},
  volume = {84},
  issue = {6},
  pages = {060302},
  numpages = {4},
  year = {2011},
  month = {Dec},
  publisher = {American Physical Society},
  doi = {10.1103/PhysRevA.84.060302},
  url = {https://link.aps.org/doi/10.1103/PhysRevA.84.060302}
}

@article{khodjasteh2007performance,
  title = {Performance of deterministic dynamical decoupling schemes: Concatenated and periodic pulse sequences},
  author = {Khodjasteh, Kaveh and Lidar, Daniel A.},
  journal = {Phys. Rev. A},
  volume = {75},
  issue = {6},
  pages = {062310},
  numpages = {16},
  year = {2007},
  month = {Jun},
  publisher = {American Physical Society},
  doi = {10.1103/PhysRevA.75.062310},
  url = {https://link.aps.org/doi/10.1103/PhysRevA.75.062310}
}

@article{quiroz2013optimized,
   title={Optimized dynamical decoupling via genetic algorithms},
   volume={88},
   ISSN={1094-1622},
   url={http://dx.doi.org/10.1103/PhysRevA.88.052306},
   DOI={10.1103/physreva.88.052306},
   number={5},
   journal={Physical Review A},
   publisher={American Physical Society (APS)},
   author={Quiroz, Gregory and Lidar, Daniel A.},
   year={2013},
   month=nov }

@article{hahn1950spin,
  title = {Spin Echoes},
  author = {Hahn, E. L.},
  journal = {Phys. Rev.},
  volume = {80},
  issue = {4},
  pages = {580--594},
  numpages = {0},
  year = {1950},
  month = {Nov},
  publisher = {American Physical Society},
  doi = {10.1103/PhysRev.80.580},
  url = {https://link.aps.org/doi/10.1103/PhysRev.80.580}
}

@article{meiboom1958modified,
    author = {Meiboom, S. and Gill, D.},
    title = "{Modified Spin‐Echo Method for Measuring Nuclear Relaxation Times}",
    journal = {Rev. Sci. Instrum.},
    volume = {29},
    number = {8},
    pages = {688-691},
    year = {1958},
    month = {08},
    issn = {0034-6748},
    doi = {10.1063/1.1716296},
    url = {https://doi.org/10.1063/1.1716296},}

@article{maudsley1986modified,
	author = {Maudsley, A. A},
	journal = {Journal of Magnetic Resonance},
	month = oct,
	number = {3},
	pages = {488--491},
	title = {{Modified Carr-Purcell-Meiboom-Gill sequence for NMR Fourier imaging applications}},
	url = {http://www.sciencedirect.com/science/article/pii/0022236486901605},
	urldate = {2020-06-19},
	volume = {69},
	year = {1986}}

@article{ng2011combining,
  title = {Combining dynamical decoupling with fault-tolerant quantum computation},
  author = {Ng, Hui Khoon and Lidar, Daniel A. and Preskill, John},
  journal = {Phys. Rev. A},
  volume = {84},
  issue = {1},
  pages = {012305},
  numpages = {38},
  year = {2011},
  month = {Jul},
  publisher = {American Physical Society},
  doi = {10.1103/PhysRevA.84.012305},
  url = {https://link.aps.org/doi/10.1103/PhysRevA.84.012305}
}

@article{viola2005random,
  title = {Random Decoupling Schemes for Quantum Dynamical Control and Error Suppression},
  author = {Viola, Lorenza and Knill, Emanuel},
  journal = {Phys. Rev. Lett.},
  volume = {94},
  issue = {6},
  pages = {060502},
  numpages = {4},
  year = {2005},
  month = {Feb},
  publisher = {American Physical Society},
  doi = {10.1103/PhysRevLett.94.060502},
  url = {https://link.aps.org/doi/10.1103/PhysRevLett.94.060502}
}

@article{genov2017arbitrarily,
  title = {Arbitrarily Accurate Pulse Sequences for Robust Dynamical Decoupling},
  author = {Genov, Genko T. and Schraft, Daniel and Vitanov, Nikolay V. and Halfmann, Thomas},
  journal = {Phys. Rev. Lett.},
  volume = {118},
  issue = {13},
  pages = {133202},
  numpages = {5},
  year = {2017},
  month = {Mar},
  publisher = {American Physical Society},
  doi = {10.1103/PhysRevLett.118.133202},
  url = {https://link.aps.org/doi/10.1103/PhysRevLett.118.133202}
}

@article{alon1987splitting,
	author = {Noga Alon},
	doi = {https://doi.org/10.1016/0001-8708(87)90055-7},
	issn = {0001-8708},
	journal = {Advances in Mathematics},
	number = {3},
	pages = {247-253},
	title = {Splitting necklaces},
	url = {https://www.sciencedirect.com/science/article/pii/0001870887900557},
	volume = {63},
	year = {1987}}

@InProceedings{alon2021efficient,
  author =	{Alon, Noga and Graur, Andrei},
  title =	{{Efficient Splitting of Necklaces}},
  booktitle =	{48th International Colloquium on Automata, Languages, and Programming (ICALP 2021)},
  pages =	{14:1--14:17},
  series =	{Leibniz International Proceedings in Informatics (LIPIcs)},
  ISBN =	{978-3-95977-195-5},
  ISSN =	{1868-8969},
  year =	{2021},
  volume =	{198},
  editor =	{Bansal, Nikhil and Merelli, Emanuela and Worrell, James},
  publisher =	{Schloss Dagstuhl -- Leibniz-Zentrum f{\"u}r Informatik},
  address =	{Dagstuhl, Germany},
  URL =		{https://drops.dagstuhl.de/entities/document/10.4230/LIPIcs.ICALP.2021.14},
  URN =		{urn:nbn:de:0030-drops-140832},
  doi =		{10.4230/LIPIcs.ICALP.2021.14}
}

@article{uhrig2009concatenated,
  title = {Concatenated Control Sequences Based on Optimized Dynamic Decoupling},
  author = {Uhrig, G\"otz S.},
  journal = {Phys. Rev. Lett.},
  volume = {102},
  issue = {12},
  pages = {120502},
  numpages = {4},
  year = {2009},
  month = {Mar},
  publisher = {American Physical Society},
  doi = {10.1103/PhysRevLett.102.120502},
  url = {https://link.aps.org/doi/10.1103/PhysRevLett.102.120502}
}

@article{alvarez2010performance,
  title = {Performance comparison of dynamical decoupling sequences for a qubit in a rapidly fluctuating spin bath},
  author = {\'Alvarez, Gonzalo A. and Ajoy, Ashok and Peng, Xinhua and Suter, Dieter},
  journal = {Phys. Rev. A},
  volume = {82},
  issue = {4},
  pages = {042306},
  numpages = {13},
  year = {2010},
  month = {Oct},
  publisher = {American Physical Society},
  doi = {10.1103/PhysRevA.82.042306},
  url = {https://link.aps.org/doi/10.1103/PhysRevA.82.042306}
}

@article{ajoy2011optimal,
  title = {Optimal pulse spacing for dynamical decoupling in the presence of a purely dephasing spin bath},
  author = {Ajoy, Ashok and \'Alvarez, Gonzalo A. and Suter, Dieter},
  journal = {Phys. Rev. A},
  volume = {83},
  issue = {3},
  pages = {032303},
  numpages = {14},
  year = {2011},
  month = {Mar},
  publisher = {American Physical Society},
  doi = {10.1103/PhysRevA.83.032303},
  url = {https://link.aps.org/doi/10.1103/PhysRevA.83.032303}
}

@article{zhao2012decoherence,
  title = {Decoherence and dynamical decoupling control of nitrogen vacancy center electron spins in nuclear spin baths},
  author = {Zhao, Nan and Ho, Sai-Wah and Liu, Ren-Bao},
  journal = {Phys. Rev. B},
  volume = {85},
  issue = {11},
  pages = {115303},
  numpages = {18},
  year = {2012},
  month = {Mar},
  publisher = {American Physical Society},
  doi = {10.1103/PhysRevB.85.115303},
  url = {https://link.aps.org/doi/10.1103/PhysRevB.85.115303}
}

@article{west2010high,
  title = {High Fidelity Quantum Gates via Dynamical Decoupling},
  author = {West, Jacob R. and Lidar, Daniel A. and Fong, Bryan H. and Gyure, Mark F.},
  journal = {Phys. Rev. Lett.},
  volume = {105},
  issue = {23},
  pages = {230503},
  numpages = {4},
  year = {2010},
  month = {Dec},
  publisher = {American Physical Society},
  doi = {10.1103/PhysRevLett.105.230503},
  url = {https://link.aps.org/doi/10.1103/PhysRevLett.105.230503}
}

@article{witzel2007concatenated,
  title = {Concatenated dynamical decoupling in a solid-state spin bath},
  author = {Witzel, W. M. and Das Sarma, S.},
  journal = {Phys. Rev. B},
  volume = {76},
  issue = {24},
  pages = {241303},
  numpages = {4},
  year = {2007},
  month = {Dec},
  publisher = {American Physical Society},
  doi = {10.1103/PhysRevB.76.241303},
  url = {https://link.aps.org/doi/10.1103/PhysRevB.76.241303}
}

@misc{suppmat,
  note = {See Supplemental Material at [URL] for proofs of the main theorem and lower bound, extensions to classical and bosonic noise, optimized pulse schedules, additional numerical results, analyses of pulse imperfections, and high-order CHaDD constructions, which includes Refs.~[73–77].}
}

@article{biercuk2009experimental,
  title = {Experimental Uhrig dynamical decoupling using trapped ions},
  author = {Biercuk, Michael J. and Uys, Hermann and VanDevender, Aaron P. and Shiga, Nobuyasu and Itano, Wayne M. and Bollinger, John J.},
  journal = {Phys. Rev. A},
  volume = {79},
  issue = {6},
  pages = {062324},
  numpages = {12},
  year = {2009},
  month = {Jun},
  publisher = {American Physical Society},
  doi = {10.1103/PhysRevA.79.062324},
  url = {https://link.aps.org/doi/10.1103/PhysRevA.79.062324}
}

@misc{kasatkin2026quantum,
      title={Quantum Error Correction and Dynamical Decoupling: Better Together or Apart?}, 
      author={Victor Kasatkin and Mario Morford-Oberst and Arian Vezvaee and Daniel A. Lidar},
      year={2026},
      eprint={2602.19042},
      archivePrefix={arXiv},
      primaryClass={quant-ph},
      url={https://arxiv.org/abs/2602.19042}, 
}

@article{paz2014general,
  title = {General Transfer-Function Approach to Noise Filtering in Open-Loop Quantum Control},
  author = {Paz-Silva, Gerardo A. and Viola, Lorenza},
  journal = {Phys. Rev. Lett.},
  volume = {113},
  issue = {25},
  pages = {250501},
  numpages = {5},
  year = {2014},
  month = {Dec},
  publisher = {American Physical Society},
  doi = {10.1103/PhysRevLett.113.250501},
  url = {https://link.aps.org/doi/10.1103/PhysRevLett.113.250501}
}

@article{clausen2012task,
  title = {Task-optimized control of open quantum systems},
  author = {Clausen, Jens and Bensky, Guy and Kurizki, Gershon},
  journal = {Phys. Rev. A},
  volume = {85},
  issue = {5},
  pages = {052105},
  numpages = {10},
  year = {2012},
  month = {May},
  publisher = {American Physical Society},
  doi = {10.1103/PhysRevA.85.052105},
  url = {https://link.aps.org/doi/10.1103/PhysRevA.85.052105}
}

@article{gordon2007universal,
  title = {Universal dephasing control during quantum computation},
  author = {Gordon, Goren and Kurizki, Gershon},
  journal = {Phys. Rev. A},
  volume = {76},
  issue = {4},
  pages = {042310},
  numpages = {5},
  year = {2007},
  month = {Oct},
  publisher = {American Physical Society},
  doi = {10.1103/PhysRevA.76.042310},
  url = {https://link.aps.org/doi/10.1103/PhysRevA.76.042310}
}

@article{kofman2004unified,
  title = {Unified Theory of Dynamically Suppressed Qubit Decoherence in Thermal Baths},
  author = {Kofman, A. G. and Kurizki, G.},
  journal = {Phys. Rev. Lett.},
  volume = {93},
  issue = {13},
  pages = {130406},
  numpages = {4},
  year = {2004},
  month = {Sep},
  publisher = {American Physical Society},
  doi = {10.1103/PhysRevLett.93.130406},
  url = {https://link.aps.org/doi/10.1103/PhysRevLett.93.130406}
}

@article{uhrig2008exact,
	author = {Uhrig, G\"otz S.},
	doi = {10.1088/1367-2630/10/8/083024},
	journal = {New Journal of Physics},
	month = {aug},
	number = {8},
	pages = {083024},
	title = {Exact results on dynamical decoupling by $\ensuremath{\pi}$ pulses in quantum information processes},
	url = {https://doi.org/10.1088/1367-2630/10/8/083024},
	volume = {10},
	year = {2008}}

@article{cywinski2008how,
  title = {How to enhance dephasing time in superconducting qubits},
  author = {Cywi\ifmmode \acute{n}\else \'{n}\fi{}ski, \L{}ukasz and Lutchyn, Roman M. and Nave, Cody P. and Das Sarma, S.},
  journal = {Phys. Rev. B},
  volume = {77},
  issue = {17},
  pages = {174509},
  numpages = {11},
  year = {2008},
  month = {May},
  publisher = {American Physical Society},
  doi = {10.1103/PhysRevB.77.174509},
  url = {https://link.aps.org/doi/10.1103/PhysRevB.77.174509}
}

@article{gross2024characterizing,
	author = {Gross, Jonathan A. and Genois, {\'E}lie and Debroy, Dripto M. and Zhang, Yaxing and Mruczkiewicz, Wojciech and Cian, Ze-Pei and Jiang, Zhang},
	da = {2024/11/22},
	doi = {10.1038/s41534-024-00917-7},
	id = {Gross2024},
	isbn = {2056-6387},
	journal = {npj Quantum Information},
	number = {1},
	pages = {123},
	title = {Characterizing coherent errors using matrix-element amplification},
	ty = {JOUR},
	url = {https://doi.org/10.1038/s41534-024-00917-7},
	volume = {10},
	year = {2024}}

@article{souza2012robust,
   title={Robust dynamical decoupling},
   volume={370},
   ISSN={1471-2962},
   url={http://dx.doi.org/10.1098/rsta.2011.0355},
   DOI={10.1098/rsta.2011.0355},
   number={1976},
   journal={Philosophical Transactions of the Royal Society A: Mathematical, Physical and Engineering Sciences},
   publisher={The Royal Society},
   author={Souza, Alexandre M. and Álvarez, Gonzalo A. and Suter, Dieter},
   year={2012},
   month=oct, pages={4748–4769} }

@article{brown2025efficient,
  title = {Efficient Chromatic-Number-Based Multiqubit Decoherence and Crosstalk Suppression},
  author = {Brown, Amy F. and Lidar, Daniel A.},
  journal = {PRX Quantum},
  volume = {6},
  issue = {2},
  pages = {020354},
  numpages = {15},
  year = {2025},
  month = {Jun},
  publisher = {American Physical Society},
  doi = {10.1103/1d4l-73x6},
  url = {https://link.aps.org/doi/10.1103/1d4l-73x6}
}

@article{barthel2010interlaced,
  title = {Interlaced Dynamical Decoupling and Coherent Operation of a Singlet-Triplet Qubit},
  author = {Barthel, C. and Medford, J. and Marcus, C. M. and Hanson, M. P. and Gossard, A. C.},
  journal = {Phys. Rev. Lett.},
  volume = {105},
  issue = {26},
  pages = {266808},
  numpages = {4},
  year = {2010},
  month = {Dec},
  publisher = {American Physical Society},
  doi = {10.1103/PhysRevLett.105.266808},
  url = {https://link.aps.org/doi/10.1103/PhysRevLett.105.266808}
}

@article{ryan2010robust,
  title = {Robust Decoupling Techniques to Extend Quantum Coherence in Diamond},
  author = {Ryan, C. A. and Hodges, J. S. and Cory, D. G.},
  journal = {Phys. Rev. Lett.},
  volume = {105},
  issue = {20},
  pages = {200402},
  numpages = {4},
  year = {2010},
  month = {Nov},
  publisher = {American Physical Society},
  doi = {10.1103/PhysRevLett.105.200402},
  url = {https://link.aps.org/doi/10.1103/PhysRevLett.105.200402}
}

@article{medford2012scaling,
  title = {Scaling of Dynamical Decoupling for Spin Qubits},
  author = {Medford, J. and Cywi\ifmmode \acute{n}\else \'{n}\fi{}ski, \L{}. and Barthel, C. and Marcus, C. M. and Hanson, M. P. and Gossard, A. C.},
  journal = {Phys. Rev. Lett.},
  volume = {108},
  issue = {8},
  pages = {086802},
  numpages = {5},
  year = {2012},
  month = {Feb},
  publisher = {American Physical Society},
  doi = {10.1103/PhysRevLett.108.086802},
  url = {https://link.aps.org/doi/10.1103/PhysRevLett.108.086802}
}

@article{souza2011robust,
  title = {Robust Dynamical Decoupling for Quantum Computing and Quantum Memory},
  author = {Souza, Alexandre M. and \'Alvarez, Gonzalo A. and Suter, Dieter},
  journal = {Phys. Rev. Lett.},
  volume = {106},
  issue = {24},
  pages = {240501},
  numpages = {4},
  year = {2011},
  month = {Jun},
  publisher = {American Physical Society},
  doi = {10.1103/PhysRevLett.106.240501},
  url = {https://link.aps.org/doi/10.1103/PhysRevLett.106.240501}
}

@article{bookatz2016improved,
   title={Improved Bounded-Strength Decoupling Schemes for Local Hamiltonians},
   volume={62},
   ISSN={1557-9654},
   url={http://dx.doi.org/10.1109/TIT.2016.2535183},
   DOI={10.1109/tit.2016.2535183},
   number={5},
   journal={IEEE Transactions on Information Theory},
   publisher={Institute of Electrical and Electronics Engineers (IEEE)},
   author={Bookatz, Adam D. and Roetteler, Martin and Wocjan, Pawel},
   year={2016},
   month=may, pages={2881–2894} }

@article{terhal2005fault,
  title = {Fault-tolerant quantum computation for local non-Markovian noise},
  author = {Terhal, Barbara M. and Burkard, Guido},
  journal = {Phys. Rev. A},
  volume = {71},
  issue = {1},
  pages = {012336},
  numpages = {11},
  year = {2005},
  month = {Jan},
  publisher = {American Physical Society},
  doi = {10.1103/PhysRevA.71.012336},
  url = {https://link.aps.org/doi/10.1103/PhysRevA.71.012336}
}

@article{aharonov2006fault,
  title = {Fault-Tolerant Quantum Computation with Long-Range Correlated Noise},
  author = {Aharonov, Dorit and Kitaev, Alexei and Preskill, John},
  journal = {Phys. Rev. Lett.},
  volume = {96},
  issue = {5},
  pages = {050504},
  numpages = {4},
  year = {2006},
  month = {Feb},
  publisher = {American Physical Society},
  doi = {10.1103/PhysRevLett.96.050504},
  url = {https://link.aps.org/doi/10.1103/PhysRevLett.96.050504}
}

@article{von2020two,
  title = {Two-Qubit Spectroscopy of Spatiotemporally Correlated Quantum Noise in Superconducting Qubits},
  author = {von L\"upke, Uwe and Beaudoin, F\'elix and Norris, Leigh M. and Sung, Youngkyu and Winik, Roni and Qiu, Jack Y. and Kjaergaard, Morten and Kim, David and Yoder, Jonilyn and Gustavsson, Simon and Viola, Lorenza and Oliver, William D.},
  journal = {PRX Quantum},
  volume = {1},
  issue = {1},
  pages = {010305},
  numpages = {23},
  year = {2020},
  month = {Sep},
  publisher = {American Physical Society},
  doi = {10.1103/PRXQuantum.1.010305},
  url = {https://link.aps.org/doi/10.1103/PRXQuantum.1.010305}
}

@article{degen2017quantum,
  title = {Quantum sensing},
  author = {Degen, C. L. and Reinhard, F. and Cappellaro, P.},
  journal = {Rev. Mod. Phys.},
  volume = {89},
  issue = {3},
  pages = {035002},
  numpages = {39},
  year = {2017},
  month = {Jul},
  publisher = {American Physical Society},
  doi = {10.1103/RevModPhys.89.035002},
  url = {https://link.aps.org/doi/10.1103/RevModPhys.89.035002}
}

@book{breuer2002theory,
  title={The theory of open quantum systems},
  author={Breuer, Heinz-Peter and Petruccione, Francesco},
  year={2002},
  publisher={OUP Oxford}
}

@article{green2013arbitrary,
   title={Arbitrary quantum control of qubits in the presence of universal noise},
   volume={15},
   ISSN={1367-2630},
   url={http://dx.doi.org/10.1088/1367-2630/15/9/095004},
   DOI={10.1088/1367-2630/15/9/095004},
   number={9},
   journal={New Journal of Physics},
   publisher={IOP Publishing},
   author={Green, Todd J and Sastrawan, Jarrah and Uys, Hermann and Biercuk, Michael J},
   year={2013},
   month=sep, pages={095004} }

@article{lidar2008towards,
  title = {Towards Fault Tolerant Adiabatic Quantum Computation},
  author = {Lidar, Daniel A.},
  journal = {Phys. Rev. Lett.},
  volume = {100},
  issue = {16},
  pages = {160506},
  numpages = {4},
  year = {2008},
  month = {Apr},
  publisher = {American Physical Society},
  doi = {10.1103/PhysRevLett.100.160506},
  url = {https://link.aps.org/doi/10.1103/PhysRevLett.100.160506}
}

@article{quiroz2012high,
  title = {High-fidelity adiabatic quantum computation via dynamical decoupling},
  author = {Quiroz, Gregory and Lidar, Daniel A.},
  journal = {Phys. Rev. A},
  volume = {86},
  issue = {4},
  pages = {042333},
  numpages = {9},
  year = {2012},
  month = {Oct},
  publisher = {American Physical Society},
  doi = {10.1103/PhysRevA.86.042333},
  url = {https://link.aps.org/doi/10.1103/PhysRevA.86.042333}
}

@article{khodjasteh2008rigorous,
  title = {Rigorous bounds on the performance of a hybrid dynamical-decoupling quantum-computing scheme},
  author = {Khodjasteh, Kaveh and Lidar, Daniel A.},
  journal = {Phys. Rev. A},
  volume = {78},
  issue = {1},
  pages = {012355},
  numpages = {14},
  year = {2008},
  month = {Jul},
  publisher = {American Physical Society},
  doi = {10.1103/PhysRevA.78.012355},
  url = {https://link.aps.org/doi/10.1103/PhysRevA.78.012355}
}

@article{hahn2025efficiency,
   title={Efficiency of dynamical decoupling for (almost) any spin–boson model},
   volume={19},
   ISSN={2542-4653},
   url={http://dx.doi.org/10.21468/SciPostPhys.19.2.035},
   DOI={10.21468/scipostphys.19.2.035},
   number={2},
   journal={SciPost Physics},
   publisher={Stichting SciPost},
   author={Hahn, Alexander and Burgarth, Daniel and Lonigro, Davide},
   year={2025},
   month=aug }

@article{zhao2014spin,
   title={Spin-boson model with diagonal and off-diagonal coupling to two independent baths: Ground-state phase transition in the deep sub-Ohmic regime},
   volume={140},
   ISSN={1089-7690},
   url={http://dx.doi.org/10.1063/1.4873351},
   DOI={10.1063/1.4873351},
   number={16},
   journal={The Journal of Chemical Physics},
   publisher={AIP Publishing},
   author={Zhao, Yang and Yao, Yao and Chernyak, Vladimir and Zhao, Yang},
   year={2014},
   month=apr }

@article{viola1999universal,
  title = {Universal Control of Decoupled Quantum Systems},
  author = {Viola, Lorenza and Lloyd, Seth and Knill, Emanuel},
  journal = {Phys. Rev. Lett.},
  volume = {83},
  issue = {23},
  pages = {4888--4891},
  numpages = {0},
  year = {1999},
  month = {Dec},
  publisher = {American Physical Society},
  doi = {10.1103/PhysRevLett.83.4888},
  url = {https://link.aps.org/doi/10.1103/PhysRevLett.83.4888}
}

@article{de2013universal,
  title = {Universal Set of Scalable Dynamically Corrected Gates for Quantum Error Correction with Always-on Qubit Couplings},
  author = {De, Amrit and Pryadko, Leonid P.},
  journal = {Phys. Rev. Lett.},
  volume = {110},
  issue = {7},
  pages = {070503},
  numpages = {5},
  year = {2013},
  month = {Feb},
  publisher = {American Physical Society},
  doi = {10.1103/PhysRevLett.110.070503},
  url = {https://link.aps.org/doi/10.1103/PhysRevLett.110.070503}
}

@article{ramon2022qubit,
  title = {Qubit decoherence under two-axis coupling to low-frequency noises},
  author = {Ramon, Guy and Cywi\ifmmode \acute{n}\else \'{n}\fi{}ski, \L{}ukasz},
  journal = {Phys. Rev. B},
  volume = {105},
  issue = {4},
  pages = {L041303},
  numpages = {6},
  year = {2022},
  month = {Jan},
  publisher = {American Physical Society},
  doi = {10.1103/PhysRevB.105.L041303},
  url = {https://link.aps.org/doi/10.1103/PhysRevB.105.L041303}
}

@misc{kim2026high,
  author       = {L. Kim and M. Marvian},
  title        = {High-order dynamical decoupling for arbitrary noise in the weak-coupling regime},
  year         = {2026},
  howpublished = {\url{https://github.com/Leeseok-628/high-order-dd-weak-coupling}},
  note         = {GitHub repository}
}

%%%%%%%%%%%%%%%%%%%%%%%%%%%%%%%%%%%%%%%%%%%%%%%%%%%%%%%%%%%%%%%%%%%%%%%%%%%%%%%
% -------------  END MATTER  -------------
%%%%%%%%%%%%%%%%%%%%%%%%%%%%%%%%%%%%%%%%%%%%%%%%%%%%%%%%%%%%%%%%%%%%%%%%%%%%%%%

\section*{End Matter}

\subsection*{Moment cancellation and the error bound}
\label{sec:moment-error-bound}

To separate the control-induced modulation from the $H_0$ evolution, define
\begin{align}
\widehat H(\tau) &:= \bigl(U_c^\dagger(T\tau)\otimes I_B\bigr) H_{SB} \bigl(U_c(T\tau)\otimes I_B\bigr) \\
&= \sum_\alpha y_\alpha(\tau)\sigma_\alpha\otimes B_\alpha.
\label{eq:control-toggled-interaction}
\end{align}
Unitary invariance gives $\|\widehat H(\tau)\|=\|H_{SB}\|=J$, and Eq.~\eqref{toggling-U0} becomes $\widetilde H(T\tau)=e^{iH_0T\tau} \widehat H(\tau)e^{-iH_0T\tau}$. Moreover, the constraints in Eq.~\eqref{moment-constraints-compact} imply
\begin{align}
\int_0^1\tau^m\widehat H(\tau)d\tau = \sum_\alpha M_{\alpha,m} \bigl(\sigma_\alpha\otimes B_\alpha\bigr)
=0, \nonumber\\
m=0,\ldots,K-1.
\label{eq:operator-moment-identities}
\end{align}

Assume that the Magnus expansion of the time-ordered exponential in Eq.~\eqref{U_interaction} converges; a sufficient condition is $JT<\pi$~\cite{magnus1954on,blanes2009the}. Writing $\mathcal L(X):=[H_0,X]$, the first Magnus term is
\begin{align}
\Omega_1(T) &=-i\int_0^T\widetilde H(t)dt
\\
&=-iT\int_0^1 e^{iT\tau\mathcal L}\widehat H(\tau)d\tau.
\label{eq:first-magnus-term}
\end{align}
Introduce the Taylor remainder through $e^{iz}=\sum_{m=0}^{K-1}(iz)^m/m!+\mathcal R_K(z)$. Because $\mathcal L$ is independent of $\tau$, every polynomial term in Eq.~\eqref{eq:first-magnus-term} is proportional to $\mathcal L^m\left[\int_0^1\tau^m\widehat H(\tau)d\tau\right]$ and therefore vanishes by Eq.~\eqref{eq:operator-moment-identities}. Hence
\begin{align}
\Omega_1(T) = -iT\int_0^1
\mathcal R_K(T\tau\mathcal L)\widehat H(\tau)d\tau.
\label{eq:first-magnus-remainder}
\end{align}
Since $\|H_0\|=1$, we have $\|\mathcal L(X)\|=\|[H_0,X]\|\le2\|X\|$. Therefore,
\begin{align}
\|\mathcal R_K(T\tau\mathcal L)\|
&\le \sum_{m=K}^{\infty}\frac{(2T\tau)^m}{m!} \le \frac{e^{2T}(2T\tau)^K}{K!},
\end{align}
which gives
\begin{align}
\|\Omega_1(T)\| &\le TJ\int_0^1 \frac{e^{2T}(2T\tau)^K}{K!}d\tau
\\
&= \frac{2^Ke^{2T}}{(K+1)!}JT^{K+1} = \mathcal O(JT^{K+1}).
\label{eq:first-magnus-bound}
\end{align}

The second Magnus term obeys
\begin{align}
\|\Omega_2(T)\| &\le \frac12\int_0^Tdt_1\int_0^{t_1}dt_2 \|[\widetilde H(t_1),\widetilde H(t_2)]\| \\ &\le \frac12J^2T^2 = \mathcal O(J^2T^2),
\label{eq:second-magnus-bound}
\end{align}
where we used $\|\widetilde H(t)\|=J$ and $\|[A,B]\|\le2\|A\|\|B\|$. All higher Magnus terms are of order $\mathcal O((JT)^r)$ with $r\ge3$. Consequently,
\begin{align}
\left\|\exp\left[\sum_{r\ge1}\Omega_r(T)\right]-I \right\| = \mathcal O(JT^{K+1})+\mathcal O(J^2T^2).
\end{align}
Finally, Eq.~\eqref{U_interaction} and the unitarity of $U_0(T)$ give
\begin{align}
\|U(T)-U_0(T)\| = \mathcal O(JT^{K+1})+\mathcal O(J^2T^2),
\end{align}
which verifies Eq.~\eqref{target}.

\subsection*{Efficient approximate construction}
\label{sec:approx-construction}

Let $f_1,\ldots,f_K:[0,1]\to[0,\infty)$ be continuous densities, and define $\mu_j(A):=\int_A f_j(\tau)d\tau$ and $F_j:=\mu_j([0,1])$.
An $\varepsilon$-consensus splitting among $q$ bins $A_1,\ldots,A_q$ satisfies $|\mu_j(A_c)-F_j/q|\le \varepsilon F_j/q$ for every $j=1,\ldots,K$ and $c=1,\ldots,q$.

Finding a splitting with nearly the minimum number of cuts is computationally difficult. Already for $q=2$ and some fixed $\varepsilon_0>0$, deciding whether an $\varepsilon_0$-consensus splitting exists with $K-1$ cuts is NP-hard~\cite{filosratsikas2018hardness}. Allowing a logarithmic overhead, the deterministic algorithm of Ref.~\cite{alon2021efficient}, given oracle access to the measures $\mu_j$, computes an $\varepsilon$-consensus splitting using
\begin{align}
L\le (q-1)K \left\lceil 2+\log_2\frac{3q}{\varepsilon} \right\rceil
\label{eq:approx-cut-count}
\end{align}
cuts in
$\mathcal O\left(\poly(q,K,\log(1/\varepsilon))\right)$ time.

For the DD construction, set $q=|\mathcal G|$ and $f_{m+1}(\tau)=\tau^m$ for $m=0,\ldots,K-1$, so that $F_{m+1}=1/(m+1)$. Label the bins by $g_c\in\mathcal G$ and define $\delta_{c,m}:=\int_{A_c}\tau^m d\tau-1/[q(m+1)]$. The consensus condition gives $|\delta_{c,m}|\le\varepsilon/[q(m+1)]$. Using $\sum_c\gamma_\alpha(g_c)=0$, the generalized moment becomes $M_{\alpha,m}=\sum_c\gamma_\alpha(g_c)\delta_{c,m}$, and therefore
\begin{align}
|M_{\alpha,m}| \le \sum_{c=1}^{q}|\delta_{c,m}| \le \frac{\varepsilon}{m+1}.
\label{eq:approx-moment-residual}
\end{align}
Each cut corresponds to one pulse. Hence, Eq.~\eqref{eq:approx-cut-count} with $q=|\mathcal G|$ gives the claimed constructive pulse bound. In the present setting, the required measure queries can be evaluated explicitly using $\int_a^b\tau^m d\tau=(b^{m+1}-a^{m+1})/(m+1)$.

\begin{figure}[t!]
  \centering
  \begin{minipage}[c]{0.48\linewidth}
    \centering
    \includegraphics[width=\linewidth]{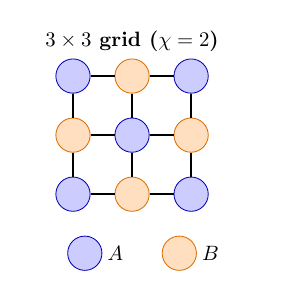}
  \end{minipage}
  \hfill
  \begin{minipage}[c]{0.50\linewidth}
    \centering
    \includegraphics[width=\linewidth]{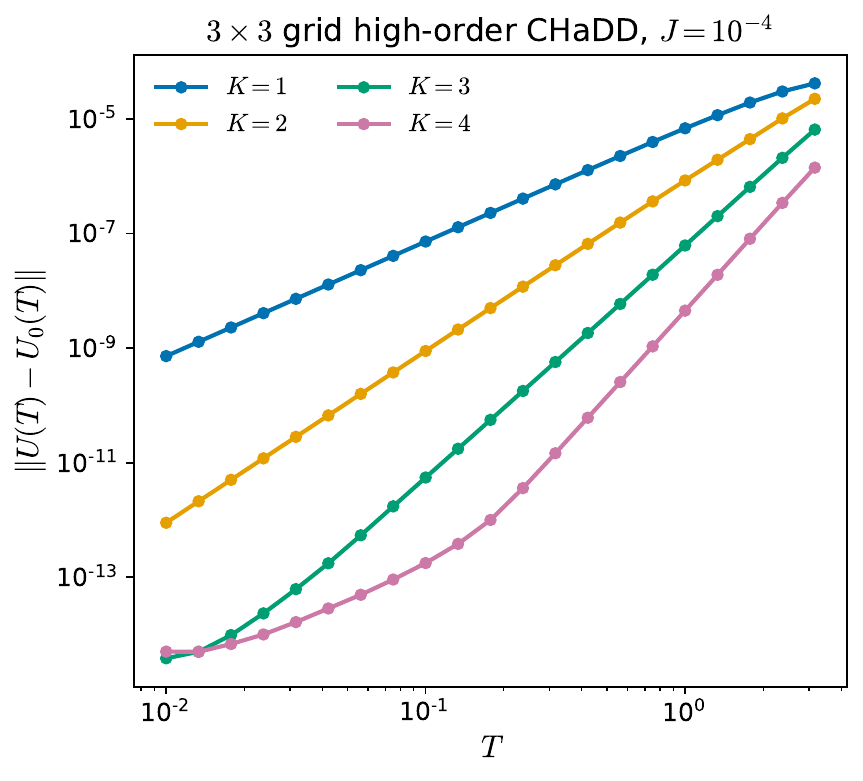}
  \end{minipage}
  \caption{{High-order CHaDD on a $3\times3$ open grid with $\chi=2$. Left: a $9$-qubit square lattice with graph-local $Z$-type edge noise. Standard CHaDD achieves first-order decoupling by alternating the collective $X$-pulses $X_A$ and $X_B$ on the two sublattices $A$ and $B$, respectively. Right: our construction yields a $3K$-pulse sequence achieving $K$th-order suppression.}}
  \label{fig:chadd_3x3}
\end{figure}

\subsection*{Higher-order CHaDD}
W consider a $9$-qubit system on a $3\times3$ open grid with single-axis couplings $H_{SB}=\sum_{v\in V} Z_v\otimes B_v+\sum_{(u,v)\in E} Z_uZ_v\otimes B_{uv}$ where $(u,v)$ runs over nearest-neighbor pairs on the grid. For this bipartite graph, CHaDD~\cite{brown2025efficient} achieves first-order decoupling by alternating collective $X$ pulses on sublattices $A$ and $B$. Applying Theorem~\ref{thm:existence} yields $L\le(2\chi-1)K=3K$ for $\chi=2$. The optimized sequence from the single-qubit case can be reused directly under $X\mapsto X_A$ and $Z\mapsto X_B$, giving a $3K(+1)$-pulse sequence that suppresses the leading error to order $K$, consistent with the numerical results in Fig.~\ref{fig:chadd_3x3}. More generally, arbitrary $2$-local noise on a graph of chromatic number $\chi$ admits high-order CHaDD sequences with $L=\mathcal O(\chi K)$ (see the SM for more details).

%%%%%%%%%%%%%%%%%%%%%%%%%%%%%%%%%%%%%%%%%%%%%%%%%%%%%%%%%%%%%%%%%%%%%%%%%%%%%%%
% -------------  SUPPLEMENTAL MATERIAL  -------------
%%%%%%%%%%%%%%%%%%%%%%%%%%%%%%%%%%%%%%%%%%%%%%%%%%%%%%%%%%%%%%%%%%%%%%%%%%%%%%%

\clearpage
\onecolumngrid
\pagenumbering{arabic}

\begin{center}
\textbf{\large Supplemental Material}
\end{center}

\setcounter{table}{0}
\renewcommand{\thetable}{S\arabic{table}}
\setcounter{figure}{0}
\renewcommand{\thefigure}{S\arabic{figure}}
\setcounter{equation}{0}
\renewcommand{\theequation}{S\arabic{equation}}
\setcounter{secnumdepth}{3}
\setcounter{theorem}{0}
\renewcommand{\thetheorem}{S\arabic{theorem}}

\section{Proof of Theorem \ref{thm:existence}}

\begin{theorem}[Existence of an order-$K$ moment-cancelling sequence with at most $(|\mathcal G|-1)K$ pulses; Restatement of Theorem~\ref{thm:existence}]\label{thm:existence_apx}
Let $H = H_0 + H_{SB}$ with $H_{SB} = \sum_\alpha \sigma_\alpha \otimes B_\alpha$ and let $\mathcal{G}$ be a decoupling group for $H$ {as defined in Eq.~\eqref{decoupling-group-condition}}. For any integer $K \ge 1$, there exist piecewise-constant functions $y_\alpha : [0,1] \to \{\pm 1\}$, indexed by all Pauli strings $\sigma_\alpha$ appearing in $H_{SB}$, such that:
\begin{enumerate}
\item\label{cond:i}
there exist (normalized) pulse timings $0 < \tau_1 < \cdots < \tau_L < 1$ with
\begin{align}\label{condition1}
L \le (|\mathcal{G}|-1)K 
\end{align}
for which all $y_\alpha(\tau)$ are constant on each subinterval $[0,\tau_1), [\tau_1,\tau_2), \ldots, [\tau_L,1]$, and 
\item\label{cond:ii}the generalized moments in \eqref{MomentsDef} vanish,
\begin{align}
M_{\alpha, m} = \int_0^1 y_\alpha(\tau) \tau^m d\tau = 0,
\end{align}
for all $\alpha$ and all $m=0,1,\dots,K-1$.
\end{enumerate}
\end{theorem}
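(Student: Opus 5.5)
We will deduce the theorem from the continuous necklace-splitting theorem of Alon (Lemma~\ref{lem:necklace}). Take the $K$ measures on $[0,1]$ with continuous densities $f_{m+1}(\tau)=\tau^m$, $m=0,\ldots,K-1$, and $q=|\mathcal G|$ agents. The lemma gives at most $(q-1)K$ cut points $0<\tau_1<\cdots<\tau_L<1$ and an assignment of the resulting $L+1$ intervals to $q$ bins $A_c$, labeled by distinct elements $g_c\in\mathcal G$, such that
\begin{align}
\int_{A_c}\tau^m\,d\tau=\frac{1}{q(m+1)}\qquad\text{for all } c \text{ and } m=0,\ldots,K-1.
\end{align}
If two adjacent intervals fall into the same bin, we simply drop the cut between them. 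This only lowers $L$, so the bound $L\le(|\mathcal G|-1)K$ in condition~\ref{cond:i} still holds.

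Next we define the control. On every interval in $A_c$ set $U_c(\tau)=g_c$; multiplying all $g_c$ by the label of the first interval if needed keeps $U_c(0)=I$. Because each $g_c$ is a Pauli string, $g_c^\dagger\sigma_\alpha g_c=\gamma_\alpha(g_c)\sigma_\alpha$ with $\gamma_\alpha(g)\in\{\pm1\}$. This sign is $+1$ if $g$ commutes with $\sigma_\alpha$ and $-1$ otherwise. So $y_\alpha(\tau)=\gamma_\alpha(g_c)$ on $A_c$, and every $y_\alpha$ is constant on each subinterval, which is condition~\ref{cond:i}.

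The moments then factor as
\begin{align}
M_{\alpha,m}=\sum_{c=1}^{q}\gamma_\alpha(g_c)\int_{A_c}\tau^m d\tau=\frac{1}{q(m+1)}\sum_{g\in\mathcal G}\gamma_\alpha(g).
\end{align}
It remains to show $\sum_{g\in\mathcal G}\gamma_\alpha(g)=0$ for every $\sigma_\alpha$ appearing in $H_{SB}$. Write the twirl as
\begin{align}
\Pi_{\mathcal G}(H_{SB})=\sum_\alpha\Big(\tfrac1q\sum_g\gamma_\alpha(g)\Big)\sigma_\alpha\otimes B_\alpha .
\end{align}
The distinct Pauli strings $\sigma_\alpha$ are linearly independent, and by assumption each $B_\alpha\neq0$. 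The vanishing of the twirl, Eq.~\eqref{decoupling-group-condition}, therefore forces each coefficient to vanish: $\sum_g\gamma_\alpha(g)=0$. One can also see this directly. $g\mapsto\gamma_\alpha(g)$ is a character of $\mathcal G$, since global phases cancel under conjugation. Its sum over the group is either $|\mathcal G|$ or $0$, and the twirl condition excludes $|\mathcal G|$. Hence $M_{\alpha,m}=0$ for all $\alpha$ and all $m<K$, which is condition~\ref{cond:ii}.

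The main point needing care is how $\mathcal G$ is treated as a group: Pauli elements carry phases $\pm1,\pm i$. We would work modulo phases, or count the elements of $\mathcal G$ as given in the twirl, so that bins correspond exactly to the terms of $\Pi_{\mathcal G}$. The character argument is unaffected by this choice. Everything else reduces to invoking the necklace lemma, whose hypotheses (finitely many continuous nonnegative measures) are clearly met. The physical pulses are recovered as $P_\ell=g_\ell g_{\ell-1}^\dagger$.
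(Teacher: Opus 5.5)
Your proposal is correct and follows essentially the same route as the paper: apply the continuous necklace-splitting lemma with $q=|\mathcal G|$ and densities $\tau^m$, label the bins by group elements, and use $\sum_{g\in\mathcal G}\gamma_\alpha(g)=0$, which follows from the vanishing twirl. Your extra remarks fill in details the paper leaves implicit without changing the argument: merging redundant cuts, relabeling by a fixed group element so that $U_c(0)=I$, and justifying the sign-sum step via linear independence of the Pauli strings or the character argument.
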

%Equivalently, $[0,1]$ can be partitioned into at most $(|\mathcal{G}|-1)K+1$ segments on which all $y_\alpha(\tau)$ are constant, so such sign functions can be realized by a pulse sequence using at most $(|\mathcal{G}|-1)K$ Pauli pulses. 

Our construction hinges on the continuous necklace-splitting theorem:

\begin{lemma}[Necklace-splitting for continuous densities, Theorem 1.2 in \cite{alon1987splitting}]\label{lem:necklace}
Let $f_1,\ldots,f_K: [0,1]\to[0,\infty)$ be continuous functions and fix an integer $q\ge2$. There exist cut points $0<\tau_1<\cdots<\tau_L<1$ with $L\le (q-1)K$ and a partition of the subintervals $[0,\tau_1), [\tau_1,\tau_2), \dots, [\tau_L,1]$ into $q$ bins $A_1,\ldots,A_q$ such that, for every $j=1,\ldots,K$ and $c=1,\ldots,q$,
\begin{align}
\int_{A_c} f_j(\tau) d\tau=\frac{1}{q}\int_0^1 f_j(\tau) d\tau.
\end{align}
\end{lemma}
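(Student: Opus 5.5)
We would prove the lemma, which is Alon's theorem, by the standard topological route. First we settle the case where $q$ is prime, using a $\mathbb Z_q$-equivariant Borsuk--Ulam theorem (Dold's theorem). Then we pass to arbitrary $q\ge2$ by induction on the number of prime factors of $q$. Throughout we use that each measure $\mu_j(A)=\int_A f_j$ is absolutely continuous. Hence endpoints of intervals are irrelevant, and degenerate (length-zero) pieces carry no mass.

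\emph{Prime case.} Set $N=(q-1)K+1$ and consider a division of $[0,1]$ into $N$ consecutive intervals of lengths $x=(x_1,\dots,x_N)\in\Delta^{N-1}$, where each interval $i$ is assigned a bin label $c_i\in\mathbb Z_q$. The space of such labeled divisions, with the label of a length-zero interval forgotten, is exactly the join $X=(\mathbb Z_q)^{*N}$. Here $\mathbb Z_q$ is viewed as a discrete $q$-point space. By Milnor's connectivity formula for joins, a join of $N$ nonempty spaces is $(N-2)$-connected, so $X$ is $(N-2)$-connected. Let $\mathbb Z_q$ act on $X$ by cyclically shifting all labels, $c_i\mapsto c_i+1$; this action is free.

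Define the test map
\[
\Phi(x,c)=\bigl(v_c\bigr)_{c\in\mathbb Z_q},\qquad v_c=\Bigl(\mu_j(A_c)-\tfrac1q F_j\Bigr)_{j=1}^K .
\]
It takes values in $W=\{(v_c)\in(\mathbb R^K)^q:\sum_c v_c=0\}\cong\mathbb R^{(q-1)K}$. The map $\Phi$ is continuous on $X$: as an interval shrinks, its mass tends to zero, so its label stops mattering. It is also equivariant with respect to cyclic permutation of the coordinates $v_c$. On $W\setminus\{0\}$ this action is free because $q$ is prime. Indeed, a nontrivial element generates all of $\mathbb Z_q$, so any fixed point must satisfy $v_1=\dots=v_q$, and the constraint $\sum_c v_c=0$ then forces $v=0$.

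Suppose $\Phi$ had no zero. Normalizing would then give an equivariant map from the $(N-2)$-connected free $\mathbb Z_q$-space $X$ to a free $\mathbb Z_q$-sphere of dimension $(q-1)K-1=N-2$. Dold's theorem forbids such a map. Hence $\Phi$ vanishes somewhere, which gives a fair division using at most $N-1=(q-1)K$ cuts. We then delete degenerate intervals and merge adjacent intervals carrying the same label, so that the surviving cuts are strictly increasing and lie in $(0,1)$.

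\emph{Composite $q$.} Write $q=ab$ with $a$ prime and assume the result for $b$. First apply the prime case with $a$ bins, using at most $(a-1)K$ cuts. Each bin $A'_r$ is a finite union of intervals, and we concatenate these intervals into a single interval of length $|A'_r|$. The restricted densities on this new interval are only piecewise continuous, but the join/Dold argument above (and hence the induction) uses only continuity of $x\mapsto\mu_j$ of the intervals, which holds for any integrable density. So we apply the statement for $b$ bins in this form. This splits each $A'_r$ into $b$ bins with at most $(b-1)K$ further cuts, each bin receiving exactly $F_j/(ab)$ of every measure.

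Pulling these cuts back to $[0,1]$ gives the $q$ bins. The total number of cuts is at most
\[
(a-1)K+a(b-1)K=(ab-1)K .
\]
A cut that falls at an existing junction of the concatenation adds nothing, which only lowers the count. To make this step rigorous, we will state the induction hypothesis for bounded measurable densities rather than continuous ones.

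The main obstacle is the topological core: verifying the connectivity of the join $(\mathbb Z_q)^{*N}$ and the freeness of both actions precisely enough to invoke Dold's theorem. For $q=2$ this step reduces to the classical Borsuk--Ulam theorem on $S^K$ (the Hobby--Rice argument). We would present that case first as a sanity check before giving the general prime case.
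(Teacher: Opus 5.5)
Your argument is correct, but it is not the route the paper takes. The paper gives no proof of Lemma~\ref{lem:necklace}. It quotes Alon's Theorem~1.2, which is stated for non-atomic finite (probability) measures on $[0,1]$, and specializes it to the absolutely continuous measures $d\mu_j=f_j\,d\tau$. That specialization needs only normalization, plus the observation that interval endpoints carry no mass.

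What you have written is essentially a reconstruction of the proof behind that citation. You handle the prime case with an equivariant Borsuk--Ulam-type theorem for the free $\mathbb Z_q$-action on $(\mathbb Z_q)^{*N}$; Alon invokes the B\'ar\'any--Shlosman--Sz\H{u}cs theorem where you invoke Dold's theorem, and the two play the same role. You then use the multiplicativity step $q=ab$ with the count $(a-1)K+a(b-1)K=(ab-1)K$, which is Alon's induction.

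The trade-off is this. The paper's one-line route suffices for its purposes and inherits rigor from the literature. Yours makes the note self-contained, at the cost of importing equivariant topology: connectivity of joins, Dold's theorem, and a free equivariant cell structure on the sphere $S(W)$. These are textbook facts rather than real obstacles.

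Your analysis also shows why the cited theorem is phrased for measures rather than continuous densities. Continuous densities are not closed under your concatenation step, so the induction must run over non-atomic measures or bounded measurable densities, exactly as you propose. In that formulation the continuity hypothesis on the $f_j$ in the lemma is superfluous. This matches the paper's remark that the measure version immediately gives the stated one.
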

We note that the original theorem is stated for finite measures. Restricting to absolutely continuous measures $d\mu_j(\tau)=f_j(\tau)d\tau$ immediately gives the above statement.

%We now instantiate Lemma~\ref{lem:necklace} with $q = |\mathcal G|$ and $f_m(\tau)=\tau^m$ ($m=0,\ldots,K-1$), mapping the $q$ bins to elements of the decoupling group $\mathcal G$ to prove Theorem~\ref{thm:existence}.

To prove Theorem~\ref{thm:existence_apx}, apply Lemma~\ref{lem:necklace} with $q=|\mathcal G|$ and $f_m(\tau)=\tau^m$ for $m=0,\ldots,K-1$, and map the $q$ bins to elements of $\mathcal G$.

\begin{proof}[Proof of Theorem~\ref{thm:existence_apx}]
For each Pauli string $\sigma_\alpha$ in $H_{SB}$ and each $g\in\mathcal G$, conjugation yields
\begin{align}
g_\ell^\dagger \sigma_\alpha g_\ell = \gamma_\alpha(g_\ell) \sigma_\alpha, \qquad \gamma_\alpha(g_\ell)\in\{\pm1\}.
\label{chi-def}
\end{align}
By expanding $H_{SB} = \sum_\alpha \sigma_\alpha\otimes B_\alpha$ and using the decoupling group condition \eqref{decoupling-group-condition},
\begin{align}
0 &= \frac{1}{|\mathcal G|} \sum_{g\in\mathcal G} g^\dagger H_{SB} g = \frac{1}{|\mathcal G|} \sum_{\ell=1}^{|\mathcal G|} \sum_\alpha g_\ell^\dagger \sigma_\alpha g_\ell \otimes B_\alpha \\
&= \sum_\alpha \left(\frac{1}{|\mathcal G|}\sum_{\ell=1}^{|\mathcal G|} \gamma_\alpha(g_\ell)\right) \sigma_\alpha \otimes B_\alpha,
\end{align}
we see that
\begin{align}
\sum_{\ell=1}^{|\mathcal G|} \gamma_\alpha(g_\ell) = 0.
\label{sign-sum-zero-alpha}
\end{align}
Next, apply Lemma~\ref{lem:necklace} with $q = |\mathcal G|$ and the $K$ nonnegative continuous functions $f_m(\tau) = \tau^m$ ($m = 0,1,\ldots,K-1$). The lemma guarantees the existence of cut points $0 < \tau_1 < \cdots < \tau_L < 1$, with $L \le (|\mathcal G|-1)K$ and a partition of the induced subintervals into bins $A_1,\ldots,A_{|\mathcal G|}$ such that for all $m<K$ and all $c$,
\begin{align}
\int_{A_c} \tau^m d\tau = \frac{1}{|G|} \int_0^1 \tau^m d\tau.
\label{necklace-equal-mass}
\end{align}
Define $y_\alpha(\tau)$ by assigning to each bin $A_c$ the sign pattern determined by $g_c$
\begin{align}
y_\alpha(\tau) := \gamma_\alpha(g_c) \qquad \text{for } \tau \in A_c.
\end{align}
By construction, each $y_\alpha$ is piecewise constant and is constant on every subinterval of the partition $[0,\tau_1),[\tau_1,\tau_2),\dots,[\tau_L,1]$, {which satisfies Condition \eqref{cond:i}.}

It remains to verify the generalized moment constraints in {Condition \eqref{cond:ii}}. For every $\alpha$ and $m<K$,
\begin{align}
M_{\alpha,m}
&= \int_0^1 y_\alpha(\tau) \tau^m d\tau \\
&= \sum_{c=1}^{|\mathcal G|} \gamma_\alpha(g_c) \int_{A_c} \tau^m d\tau \label{secondline}\\
&= \frac{1}{|\mathcal G|} \left( \sum_{c=1}^{|\mathcal G|} \gamma_\alpha(g_c) \right) \int_0^1 \tau^m d\tau  \label{thirdline},\\
&= 0, \label{fourthline}
\end{align}
%where we used property \eqref{necklace-equal-mass} of the bins {to get Eq.~\eqref{thirdline}} and the decoupling group condition \eqref{sign-sum-zero-alpha} {to get Eq.~\eqref{fourthline}}.
where we used Eq.~\eqref{necklace-equal-mass} and Eq.~\eqref{sign-sum-zero-alpha}.
By Lemma~\ref{lem:necklace}, the number of cut points (and hence the number of pulses) is at most $L \le (|\mathcal G|-1)K$, and the number of constant segments is at most $L+1 \le (|\mathcal G|-1)K+1$. This completes the proof.
\end{proof}

\section{Proof of Lemma~\ref{lem:lowerbound}}

%In this section, we show that any DD sequence that cancels all generalized time-moments up to order $K-1$ must use at least $K$ pulses. Consequently, the $3K$-pulse sequence we construct for single-qubit general decoherence is asymptotically tight in $K$.

%\red{MM: copy the statement of lemma}

\begin{lemma}[$\Omega(K)$ lower bound; Restatement of Lemma~\ref{lem:lowerbound}]\label{lem:lowerbound_apx}
Let $y:[0,1]\to\{\pm1\}$ be piecewise constant with $r$ sign changes on $(0,1)$. If 
\begin{align}\label{lemma_condition_apdx}
\int_0^1 y(\tau) \tau^m d\tau=0\qquad (m=0,1,\ldots,K-1),
\end{align}
then necessarily $r\ge K$.
\end{lemma}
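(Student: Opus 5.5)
The plan is an orthogonality argument against a polynomial built from the sign-change points. Suppose, for contradiction, that $r\le K-1$. Let $0<s_1<\cdots<s_r<1$ be the points in $(0,1)$ where $y$ changes sign. Define
\begin{align}
p(\tau):=\prod_{i=1}^{r}(\tau-s_i),
\end{align}
with $p\equiv 1$ if $r=0$. This polynomial has degree $r\le K-1$, so it is a linear combination of $1,\tau,\ldots,\tau^{K-1}$. By linearity, condition \eqref{lemma_condition_apdx} therefore gives
\begin{align}
\int_0^1 y(\tau)p(\tau)\,d\tau=0.
\end{align}

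The key step is to show that $y(\tau)p(\tau)$ has constant sign on $(0,1)\setminus\{s_1,\ldots,s_r\}$.

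1. On each open interval $(s_i,s_{i+1})$, with the conventions $s_0=0$ and $s_{r+1}=1$, the sign $y$ is constant. This holds because $y$ is piecewise constant with values in $\{\pm1\}$ and changes sign only at the $s_i$. Any discontinuities of $y$ that are not sign changes affect only finitely many points, so they do not change the integral.
2. Crossing $s_i$, the sign of $y$ flips by hypothesis.
3. Crossing $s_i$, the sign of $p$ also flips, because $s_i$ is a simple root.

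Hence the product $yp$ keeps the same sign on every interval. Since $p$ vanishes only at finitely many points, $yp$ is nonzero almost everywhere. This gives $\left|\int_0^1 y p\,d\tau\right|=\int_0^1|p(\tau)|\,d\tau>0$, contradicting the vanishing above. Therefore $r\ge K$.

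The only delicate point is bookkeeping, namely making precise what ``$r$ sign changes'' means. I would read it as the number of points in $(0,1)$ at which the constant value of $y$ differs between the adjacent pieces. Points where $y$ takes the same value on both sides, or where $y$ differs only on a measure-zero set, are discarded, since they affect neither $r$ nor the integrals. With this reading, the simple-root and sign-flip matching in the key step is immediate. The case $r=0$ gives the contradiction directly from the $m=0$ condition, because then $\int_0^1 y\,d\tau=\pm1\ne0$.
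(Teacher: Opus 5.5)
Your proof is correct, but it takes a different route from the paper's.

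You test $y$ against $p(\tau)=\prod_{i=1}^r(\tau-s_i)$. Since $y$ and $p$ change sign at exactly the same points, $yp=\pm|p|$ almost everywhere, so $\int_0^1 yp\,d\tau\neq 0$. The paper instead takes $P(\tau)=a\tau\prod_{j=1}^r(\tau-\tau_j)$, normalized so that $P(1)=s_r$, and tests against $q=P'$, which also has degree at most $r$. Integrating $P'$ piece by piece and telescoping (a discrete integration by parts) gives $\int_0^1 yP'\,d\tau=s_rP(1)-s_0P(0)-\sum_j(s_j-s_{j-1})P(\tau_j)=1$.

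The two proofs rely on different properties of $y$. The paper's computation is purely algebraic. It uses only that $y$ is constant between consecutive breakpoints, never that the value actually flips at $\tau_j$. It therefore shows more generally that no real step function with $r$ breakpoints and nonzero final value is orthogonal to all polynomials of degree at most $r$. Your positivity argument uses the alternation essentially but not the step structure. It is the classical argument showing that the degree-$K$ orthogonal polynomial has $K$ sign changes in the interval. It extends verbatim to any integrable $y$ that is not almost everywhere zero: if such a $y$ is orthogonal to all polynomials of degree at most $K-1$, it must change sign at least $K$ times. For the $\{\pm1\}$-valued step functions in the lemma, breakpoints and sign changes coincide, so the two conclusions agree.
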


{The idea of the proof is as follows. Since Eq.~\eqref{lemma_condition_apdx} means that $y$ is orthogonal to every polynomial with degree $r$,  it suffices to find a single polynomial $q$ with $\deg q\le r$ such that $\int_0^1 y(\tau)q(\tau)d\tau\neq 0$. By choosing a polynomial $P$ that vanishes at all sign-flip points and setting $q:=P'$, the integral $\int_0^1 y(\tau)P'(\tau)d\tau$ telescopes to an endpoint term $P(1)$, which we can set to be nonzero. This yields the desired contradiction.}

\begin{proof}
Let the sign-flipping points be $0<\tau_1<\cdots<\tau_r<1$, and set $\tau_0:=0$, $\tau_{r+1}:=1$. Let $s_k\in\{\pm1\}$ be the constant value of $y$ on $(\tau_k,\tau_{k+1})$ for $k=0,\dots,r$. Define 
\begin{align}
P(\tau)=a \tau\prod_{j=1}^{r}(\tau-\tau_j),\qquad 
a:=\frac{s_r}{\prod_{j=1}^{r}(1-\tau_j)}.
\end{align}
Then $P(\tau_j)=0$ for $j=1,\dots,r$, $P(0)=0$, and $P(1)=s_r$. We choose $q(\tau)=P'(\tau)$; note that $\deg P=r+1$ and $\deg q\le r$.

We compute
\begin{align}
\int_{0}^{1} y(\tau) P'(\tau)  d\tau
& = \sum_{k=0}^{r} s_k \int_{\tau_k}^{\tau_{k+1}} P'(\tau)  d\tau \\
&= \sum_{k=0}^{r} s_k\left(P(\tau_{k+1})-P(\tau_k)\right).
\end{align}
Rearranging the sum gives
\begin{align}
& \sum_{k=0}^{r} s_k\left(P(\tau_{k+1})-P(\tau_k)\right) \\
& \quad = s_r P(1) - s_0 P(0) - \sum_{j=1}^{r}\left(s_j-s_{j-1}\right) P(\tau_j) \\
& \quad = 1,
\end{align}
where the last equality holds as (by construction) $P(\tau_j)=0$ for all $j$ and $P(0)=0$, while $P(1)=s_r$.

If $r\le K-1$ then $\deg P'\le K-1$, so we can write $P'(\tau)=\sum_{m=0}^{K-1}a_m\tau^m$ and thus
\begin{align}
\int_0^1 y(\tau) P'(\tau)  d\tau
= \sum_{m=0}^{K-1} a_m \int_0^1 y(\tau) \tau^m  d\tau
=0,
\end{align}
which contradicts the previous equality. Therefore $r\ge K$.
\end{proof}

\section{Extensions beyond the noise model used in the main text}

\subsection{Extension to time-dependent (classical) noise}
In this section, we briefly note that the same moment-cancellation mechanism extends to time-dependent (classical) Hamiltonian noise models that are widely used to describe environmental fluctuations, and for which DD performance has been extensively studied (see, e.g., Refs.~\cite{green2013arbitrary,bylander2011noise,alvarez2010performance,souza2011robust,ramon2022qubit}).
For concreteness, consider a single qubit with
\begin{align}
H(t)=H_c(t)+H_n(t),\qquad 
H_n(t)=\sum_{\alpha\in\{x,y,z\}}\beta_\alpha(t)\sigma_\alpha,
\label{classical_model}
\end{align}
where $\beta_\alpha(t)\in\mathbb{R}$ are unknown time-dependent coefficients and $H_c(t)$ implements ideal instantaneous Pauli pulses. Let $U_c(t)$ be the ideal control propagator generated by $H_c(t)$, and write the toggling-frame evolution as $\tilde U(T)=\mathcal{T}\exp\left[-i\int_0^T \tilde H_n(t)dt\right]$ with $\tilde H_n(t):=U_c^\dagger(t)H_n(t)U_c(t)$. For Pauli pulses, conjugation maps each Pauli to itself up to a sign on every free-evolution segment, so
\begin{align}
\tilde H_n(t)=\sum_{\alpha}y_\alpha(t)\beta_\alpha(t)\sigma_\alpha,\qquad y_\alpha(t)\in\{\pm1\}.
\label{classical_toggling}
\end{align}
Expanding $\tilde U(T)$ via the Magnus series, the leading term is
\begin{align}
\Omega_1(T)=-i\int_0^T \tilde H_n(t)dt
=-i\sum_\alpha \sigma_\alpha \int_0^T y_\alpha(t)\beta_\alpha(t)dt.
\label{classical_Omega1}
\end{align}
We can Taylor expand $\beta_\alpha(T\tau)$ in normalized time $\tau=t/T$. Assuming $\beta_\alpha$ is $K$-times differentiable on $[0,T]$, Taylor's theorem gives
\begin{align}
\beta_\alpha(T\tau)=\sum_{m=0}^{K-1}\frac{\beta_\alpha^{(m)}(0)}{m!}(T\tau)^m+\frac{(T\tau)^K}{K!}\beta_\alpha^{(K)}(\xi_{\alpha,\tau}),
\qquad \xi_{\alpha,\tau}\in[0,T\tau].
\label{classical_Taylor}
\end{align}
Substituting \eqref{classical_Taylor} into \eqref{classical_Omega1} shows that the coefficients of $T^{m+1}$ are proportional to the same generalized moments as in the main text,
\begin{align}
M_{\alpha,m}=\int_0^1 y_\alpha(\tau)\tau^m d\tau,
\label{classical_moments}
\end{align}
so imposing $M_{\alpha,m}=0$ for all $\alpha$ and $m=0,\dots,K-1$ cancels all contributions to $\Omega_1(T)$ up to order $K$. 

Define the noise-amplitude scale
\begin{align}
\beta_{\max}:=\sup_{t\in[0,T]}\sum_{\alpha\in\{x,y,z\}}|\beta_\alpha(t)|,
\label{classical_betamax}
\end{align}
so that $\|\tilde H_n(t)\|\le \beta_{\max}$ for all $t$. In addition, in the slow (low-bandwidth) regime we assume that $\beta_\alpha$ has an effective cutoff frequency $\omega_c$, in the sense that
\begin{align}
\sup_{t\in[0,T]}|\beta_\alpha^{(K)}(t)| =\mathcal O\big(\beta_{\max}\omega_c^{K}\big).
\label{classical_slow_assump}
\end{align}
For example, consider a single-frequency noise $\beta_\alpha(t)=A_\alpha\cos(\omega t+\phi_\alpha)$. Then $\sup_{t\in[0,T]}|\beta_\alpha(t)|=|A_\alpha|$ and $\sup_{t\in[0,T]}|\beta_\alpha^{(K)}(t)|=|A_\alpha||\omega|^K$. If $|\omega|\le \omega_c$ and $|A_\alpha|\le \beta_{\max}$, it follows that $\sup_{t\in[0,T]}|\beta_\alpha^{(K)}(t)|=\mathcal O(\beta_{\max}\omega_c^K)$ as in \eqref{classical_slow_assump}.

Then the Taylor remainder term in \eqref{classical_Taylor} implies
\begin{align}
\|\Omega_1(T)\| =\mathcal O\big(\beta_{\max}T(\omega_cT)^{K}\big).
\label{classical_Omega1_bound}
\end{align}
The second-order Magnus expansion is bounded by
\begin{align}
\|\Omega_2(T)\|=\mathcal O(\beta_{\max}^2T^2).
\label{classical_Omega2_bound}
\end{align}
Hence, in the small $\beta_{\max},T$ regime where the Magnus expansion converges,
\begin{align}
\|\tilde U(T)-I\| =\mathcal O\big(\beta_{\max}T(\omega_cT)^{K}\big)+\mathcal O(\beta_{\max}^2T^2),
\label{classical_final}
\end{align}
which matches the structure of the main text estimate in Eq.~\eqref{target}, with the system-bath coupling scale replaced by the classical noise amplitude scale.

{
\subsection{Generalized filter-function interpretation}

Although our analysis in the main text is formulated directly at the operator level, the same moment constraints admit a natural frequency-domain interpretation within the standard generalized filter-function and spectral-overlap formalisms \cite{green2013arbitrary,paz2014general,clausen2012task,gordon2007universal,kofman2004unified}.

For simplicity, we suppress the free system Hamiltonian in this subsection and write the toggling-frame interaction as
\begin{align}
\widetilde H_{SB}(t) = \sum_{\alpha\in\{x,y,z\}} y_\alpha(t)\sigma_\alpha \otimes B_\alpha(t),
\qquad
B_\alpha(t):=e^{iH_B t}B_\alpha e^{-iH_B t},
\label{ff_toggling_general}
\end{align}
where $y_\alpha(t)\in\{\pm1\}$ are the switching functions generated by the pulse sequence. Assume that the bath is initially in a stationary state $\rho_B$ and that the first moments vanish,
\begin{align}
\Tr\big(\rho_B B_\alpha(t)\big)=0.
\end{align}
Then the leading nontrivial contribution to the reduced dynamics is governed by the bath two-point correlation functions
\begin{align}
C_{\alpha\beta}(\tau)
:= \Tr\big(\rho_B B_\alpha(\tau)B_\beta(0)\big),
\label{ff_corr_def}
\end{align}
and, on the control side, by bilinear functionals of the switching functions \cite{clausen2012task,gordon2007universal,kofman2004unified}.

For a stationary bath, define the spectral-density matrix
\begin{align}
G_{\alpha\beta}(\omega) := \int_{-\infty}^{\infty} d\tau  e^{i\omega\tau} C_{\alpha\beta}(\tau),
\label{ff_spec_def}
\end{align}
and the Fourier amplitudes of the switching functions
\begin{align}
Y_\alpha(\omega,T) := \int_0^T dt y_\alpha(t)e^{i\omega t} = T\int_0^1 d\tau y_\alpha(\tau)e^{i\omega T\tau}.
\label{ff_amp_def}
\end{align}
In the standard generalized filter-function picture, the second-order coefficients of the reduced dynamics take the form
\begin{align}
\Gamma_{\alpha\beta}(T) \propto
\int_{-\infty}^{\infty}\frac{d\omega}{2\pi}
G_{\alpha\beta}(\omega) F_{\alpha\beta}(\omega,T),
\qquad
F_{\alpha\beta}(\omega,T) := Y_\alpha(\omega,T)Y_\beta^*(\omega,T),
\label{ff_overlap}
\end{align}
up to the system-operator tensor structure \cite{clausen2012task,gordon2007universal,kofman2004unified}. Thus $Y_\alpha(\omega,T)$ and $F_{\alpha\beta}(\omega,T)$ play the role of generalized filter amplitudes and quadratic filter functions associated with the pulse sequence \cite{green2013arbitrary,paz2014general}.

\subsubsection{Low-frequency zeros from moment cancellation}

In the low-frequency regime, the connection with our moment constraints is immediate. Using Eq.~\eqref{ff_amp_def} and expanding the exponential gives
\begin{align}
Y_\alpha(\omega,T) = T\sum_{m=0}^{\infty}\frac{(i\omega T)^m}{m!}
\int_0^1 d\tau y_\alpha(\tau)\tau^m = T\sum_{m=0}^{\infty}\frac{(i\omega T)^m}{m!}M_{\alpha,m}.
\label{ff_moment_expansion}
\end{align}
Therefore, if the moment constraints in Eq.~\eqref{moment-constraints-compact} hold, namely
\begin{align}
M_{\alpha,m}=0, \qquad m=0,1,\dots,K-1,
\end{align}
so, in the low-frequency regime $\omega T\ll 1$,
\begin{align}
Y_\alpha(\omega,T) = \mathcal O\big(T(\omega T)^K\big),
\qquad
F_{\alpha\beta}(\omega,T) = \mathcal O\big(T^2(\omega T)^{2K}\big).
\label{ff_lowfreq_zero}
\end{align}
In other words, the moment-cancellation conditions enforce a $K$th-order zero of each control Fourier amplitude at low frequency, and hence a $2K$th-order infrared zero of the associated quadratic filter function \cite{paz2014general}.

\subsubsection{Exact filter zeros at prescribed frequencies}

More generally, however, one need not restrict attention to low frequency. In the present single-qubit Pauli setting, the same necklace-splitting construction can be applied directly to Fourier test functions in order to impose exact zeros of the control Fourier amplitudes at any prescribed frequency, and more generally at any prescribed finite set of frequencies. To see this, fix a total evolution time $T$ and let
\begin{align}
\Omega=\{\omega_1,\dots,\omega_M\}\subset\mathbb R
\end{align}
be any finite target set. Consider the nonnegative continuous functions
\begin{align}
f_0(\tau)=1,\qquad
f^{(c)}_r(\tau)=1+\cos(\omega_rT\tau),\qquad
f^{(s)}_r(\tau)=1+\sin(\omega_rT\tau),
\qquad r=1,\dots,M.
\end{align}
Applying the continuous necklace-splitting theorem of Lemma \ref{lem:necklace} to this family of $2M+1$ functions with $q=4$ yields a partition of $[0,1]$ using at most
\begin{align}
L\le 3(2M+1)
\end{align}
cuts, together with an assignment of the resulting subintervals to the four Pauli sectors $g\in\{I,X,Y,Z\}$, such that every sector receives the same integral of each of these functions. Defining the switching pattern on the sector labeled by $g$ through
\begin{align}
y_\alpha(\tau)=\gamma_\alpha(g),
\qquad
g^\dagger \sigma_\alpha g = \gamma_\alpha(g)\sigma_\alpha,
\qquad
\gamma_\alpha(g)\in\{\pm1\},
\end{align}
and using the relation $\sum_{g\in\{I,X,Y,Z\}}\gamma_\alpha(g)=0$, one obtains
\begin{align}
\int_0^1 d\tau y_\alpha(\tau)=0,
\qquad
\int_0^1 d\tau y_\alpha(\tau)\cos(\omega_rT\tau)=0,
\qquad
\int_0^1 d\tau y_\alpha(\tau)\sin(\omega_rT\tau)=0
\end{align}
for every $\alpha\in\{x,y,z\}$ and every $r=1,\dots,M$. Hence
\begin{align}
Y_\alpha(\omega_r,T)=T\int_0^1 d\tau y_\alpha(\tau)e^{i\omega_rT\tau}=0,
\qquad
F_{\alpha\beta}(\omega_r,T)=0,
\end{align}
for all $\alpha,\beta\in\{x,y,z\}$ and all $r=1,\dots,M$. Thus, beyond the low-frequency Taylor-zero picture implied by the moment constraints, the same combinatorial construction can be used to synthesize exact filter-function notches at arbitrary prescribed frequencies, with no low-frequency assumption.

\subsubsection{Pure-dephasing spin-boson model}
As a concrete example where the filter-function interpretation becomes exact, consider the pure-dephasing spin-boson model
\begin{align}
H = H_B + Z\otimes B_z, \qquad H_B=\sum_k \omega_k b_k^\dagger b_k, \qquad B_z=\sum_k \bigl(g_k b_k+g_k^* b_k^\dagger\bigr).
\label{sb_pd_model}
\end{align}
Under ideal instantaneous Pauli $X$ pulses, the toggling-frame interaction becomes
\begin{align}
\widetilde H_{SB}(t) = y_z(t) Z \otimes B_z(t),
\qquad B_z(t):=e^{iH_B t}B_z e^{-iH_B t}.
\end{align}
In this case the coherence of an evolved state is known exactly \cite{uhrig2007keeping,uhrig2008exact}. Writing the coherence as
\begin{align}
\rho_{01}(T)=\rho_{01}(0)e^{-\chi(T)},
\end{align}
the decoherence exponent can be expressed in spectral-overlap form as
\begin{align}
\chi(T) = \int_0^\infty d\omega
J_z(\omega)\coth\left(\frac{\beta\omega}{2}\right)
|Y_z(\omega,T)|^2,
\label{sb_pd_exact_chi}
\end{align}
where $J_z(\omega)$ denotes the corresponding bath spectral density and
\begin{align}
Y_z(\omega,T)=\int_0^T dt y_z(t)e^{i\omega t}
\end{align}
is precisely the $\alpha = z$ specialization of the control Fourier amplitude introduced in Eq.~\eqref{ff_amp_def}. Equivalently, in the more conventional notation one may write $F(\omega T)\propto \omega^2 |Y_z(\omega,T)|^2$ which is often referred to as filter function \cite{cywinski2008how,uhrig2008exact}.

Hence, if the moment constraints hold in the dephasing channel,
\begin{align}
\int_0^1 d\tau y_z(\tau)\tau^m=0, \qquad m=0,1,\dots,K-1,
\end{align}
then Eq.~\eqref{ff_lowfreq_zero} implies
\begin{align}
Y_z(\omega,T)=\mathcal O\big(T(\omega T)^K\big),
\end{align}
and therefore the exact decoherence exponent acquires a $2K$th-order infrared suppression. If, in addition, $\int_0^\infty d\omega J_z(\omega)\coth\left(\frac{\beta\omega}{2}\right)\omega^{2K}<\infty$, then
\begin{align}
\chi(T)=\mathcal O(T^{2K+2}), \qquad T\to 0.
\label{sb_pd_scaling}
\end{align}

A canonical example is the Uhrig dynamical decoupling (UDD) sequence \cite{uhrig2007keeping,uhrig2008exact}, with pulse locations $t_j=T\sin^2\left(\frac{j\pi}{2K+2}\right)$ for $j=1,\dots,K$ for which the first $K$ derivatives of the corresponding dephasing control amplitude vanish at $\omega=0$.

This suggests that the same low-frequency suppression mechanism may extend beyond pure dephasing to more general spin-boson noise models \cite{zhao2014spin,hahn2025efficiency}.

}

\section{Numerical Simulation Details and Additional Simulations}
\subsection{Numerical optimization procedure}
In the single-qubit example in the main text, the goal is to construct a Pauli pulse sequence that satisfies the moment constraints up to order $K-1$ while using only $3K$ $(+1)$ pulses. We fix a pulse pattern a priori and optimize only the inter-pulse intervals.

For a given order $K$, we fix the number of inter-pulse intervals to be $L = 3K+1$ and choose a fixed sequence of control pulses $X \to Z \to X \to Z \to \cdots$. (We note that different pulse orderings may lead to either shorter sequences or sequences of the same length with improved robustness to pulse errors. Exploring such optimizations is left for future work). This generates control frames $f_1,\dots,f_L \in G=\{I,X,Y,Z\}$. Each frame $f_\ell$ is associated with a ``sign triple''
\begin{align}
\mathbf{s}_\ell = (s_{\ell,x},s_{\ell,y},s_{\ell,z}) \in \{\pm 1\}^3,
\qquad
\ell=1,\dots,L,
\end{align}
which specifies the sign of the Pauli operators $\sigma_\alpha$ in the toggling frame during the $\ell$-th interval. Equivalently, this defines switching functions $y_\alpha(\tau)$ for $\alpha\in\{x,y,z\}$ that are piecewise constant on $[0,1]$: $y_\alpha(\tau)=s_{\ell,\alpha}$ for $\tau\in[t_{\ell-1},t_\ell)$, where $0 = t_0 < t_1 < \cdots < t_L = 1$ are the normalized segment boundaries.

Rather than performing any numerical quadrature, we evaluate the integrals analytically. Since $y_\alpha(\tau)$ is piecewise constant,
\begin{align}
\int_0^1  \tau^m y_\alpha(\tau)d\tau = \sum_{\ell=1}^L s_{\ell,\alpha}\int_{t_{\ell-1}}^{t_\ell}\tau^m d\tau = \frac{1}{m+1} \sum_{\ell=1}^L s_{\ell,\alpha} \bigl(t_\ell^{m+1}-t_{\ell-1}^{m+1}\bigr),
\end{align}
and the overall factor $1/(m+1)$ is irrelevant for the location of the zeros. We therefore collect all moment residuals into a single vector $r(\Delta\tau)\in\mathbb{R}^{3K}$ whose components are
\begin{align}
r_{\alpha,m}(\Delta\tau) = \sum_{\ell=1}^L s_{\ell,\alpha}
\bigl(t_\ell^{m+1}-t_{\ell-1}^{m+1}\bigr), \qquad \alpha\in\{x,y,z\}, \quad m=0,\dots,K-1,
\end{align}
and minimize the squared norm
\begin{align}
\Phi(\Delta\tau)
= \sum_{\alpha,m} \bigl|r_{\alpha,m}(\Delta\tau)\bigr|^2,
\label{moment-cost-appendix}
\end{align}
subject to $\Delta\tau_\ell>0$ and $\sum_\ell \Delta\tau_\ell=1$. %Up to an overall constant, the objective \eqref{moment-cost-appendix} is equivalent to the cost function defined in \eqref{moment-cost-main} of the main text.

To enforce the simplex constraint, we parameterize the intervals by an unconstrained vector $\theta\in\mathbb{R}^L$ via a softmax map,
\begin{align}
\Delta\tau_\ell(\theta) = \frac{e^{\theta_\ell}}{\sum_{j=1}^L e^{\theta_j}}, \qquad \ell=1,\dots,L,
\end{align}
and treat $r(\theta):=r(\Delta\tau(\theta))$ as a nonlinear map $\mathbb{R}^L\to\mathbb{R}^{3K}$. The optimization problem is then formulated as a nonlinear least-squares problem for $\theta$, which we solve with a standard (trust-region reflective) algorithm. Specifically, we use the \texttt{least\_squares} function from \textsc{SciPy}. We use stopping tolerances on the function value, step size, and gradient ($\texttt{ftol} = \texttt{xtol} = \texttt{gtol} = 10^{-15}$) and allow up to $\texttt{max\_nfev}=10^5 := N_{\mathrm {fev}}$ function evaluations per run. %We make our code publicly available in \cite{}. 

For a given order $K$, the nonlinear least-squares problem has $3K+1$ variables and $3K$ residuals. Each evaluation of the moment residual vector $r(\theta)$ costs $\mathcal O(K^2)$ operations, and the trust-region reflective solver uses at most $N_{\mathrm{fev}}$ function evaluations per run. The total classical computational cost therefore scales as $\mathcal O (N_{\mathrm{fev}} K^2)$. In general, if one chooses to apply the same numerical optimization to a DD sequence constructed from a decoupling group $\mathcal{G}$ with $\mathcal O(|\mathcal{G}|K)$ segments, the classical computational cost scales as $\mathcal{O} \bigl(N_{\mathrm{fev}} (|\mathcal{G}|K)^2\bigr)$.

\begin{figure}[t!]
    \centering
    \includegraphics[width=1\linewidth]{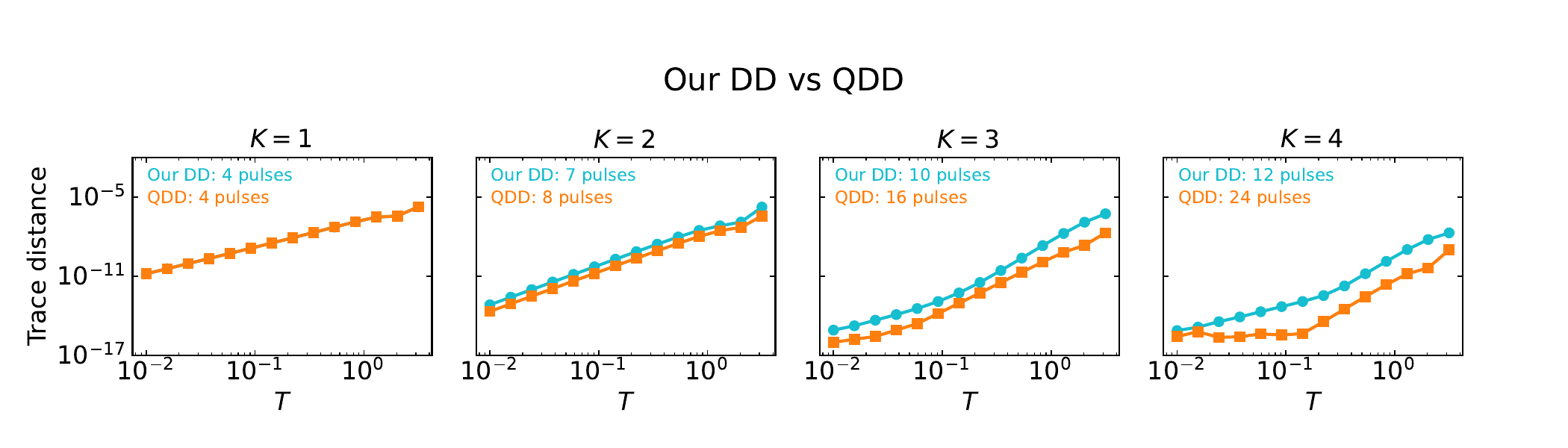}
        \caption{Comparison between our order-$K$ decoupling sequence and QDD in the same single-qubit model used in the main text with $\beta=1$ and $J=10^{-5}$. The plot shows the trace-distance error between the actual and ideal system states as a function of total evolution time $T$.}
    \label{fig:qdd_comparison}
\end{figure}

\begin{figure*}[t]
  \centering
  \includegraphics[width=1\textwidth]{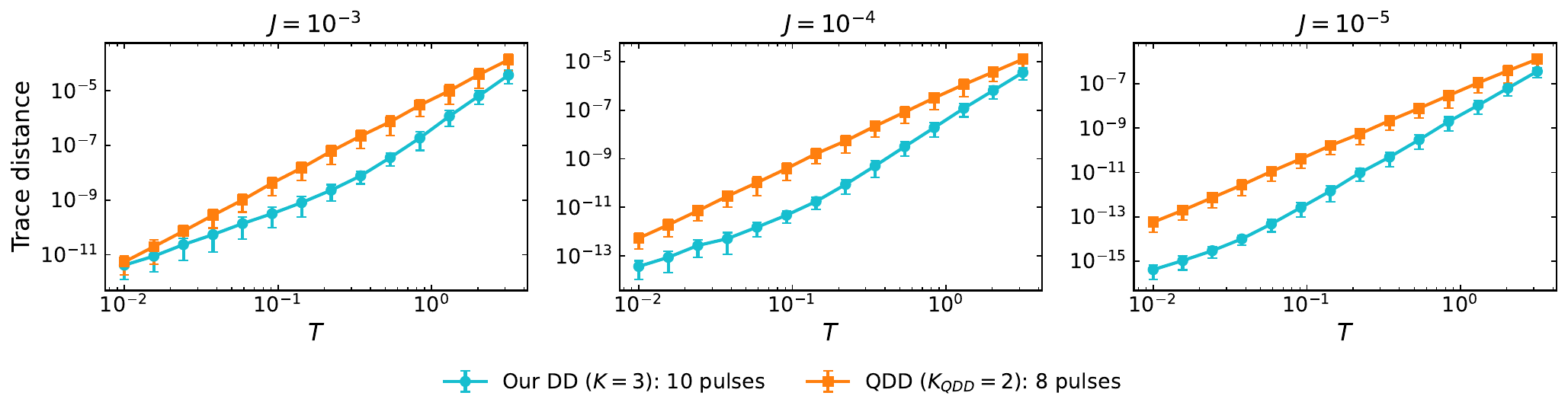}
  \caption{Comparison between our DD sequences and QDD for the single-qubit general decoherence model. We show the average trace distance error of the reduced system state versus $T$ for several small coupling strengths $J$, comparing our $K=3$ DD (10 pulses) with QDD at $K_{\rm QDD}=2$ (8 pulses). Each data point is averaged over 100 random initial product states.}
  \label{fig:QDD_comparison_SM}
\end{figure*}

\subsection{Comparison with QDD: additional simulations}\label{QDD_comparison}
In the main text, we benchmarked our construction against Quadratic Dynamical Decoupling (QDD) \cite{west2010near}. QDD is obtained by nesting two single-axis Uhrig sequences along orthogonal control axes (here taken to be $X$ and $Z$), and suppresses arbitrary single-qubit noise to order $K_{\rm QDD}$ using at most $(K_{\rm QDD}+1)^2$ pulses. QDD is widely considered to be (nearly) optimal for suppressing single-qubit noise.

In this section, we provide additional simulations on the comparison for various orders $K$. We use the same numerical setup as the numerical section in the main text, with the weak-coupling parameters fixed to $\beta = 1$ and $J = 10^{-5}$. We choose the initial state $\ket{+}$ for both the system and the bath, evolve under the noisy dynamics with a given DD sequence, and compute the trace distance between the reduced system state and the ideal state obtained in the absence of system-bath coupling. This error is plotted as a function of the total evolution time $T$. Figure \ref{fig:qdd_comparison} shows results for $K=K_{\rm QDD} =1,2,3,4$. Our DD sequence achieves the expected error scaling using only $3K(+1)$ pulses, compared to $(K_{\rm QDD}+1)^2$ pulses for QDD. 

{Finally, in addition to the comparisons presented in the main text (Fig.~\ref{fig:QDD_comparison_main}), we benchmark performance at a comparable resource budget by comparing our sequence ($K=3$, 10 pulses) with a QDD sequence of similar pulse count ($K_{\rm QDD} = 2$, 8 pulses) under the same setup. Although our protocol uses two additional pulses, our protocol achieves lower trace distance across all tested ranges of $T$ and $J$, as shown in Fig.~\ref{fig:QDD_comparison_SM}.}

\subsection{Extended data of optimized pulse timings and effect of imprecise pulse timing}

\begin{table}[h]
  \caption{Normalized segment lengths obtained by minimizing $\Phi(\Delta\tau)$.
  The pulse sequence is $X \to Z \to X \to Z \cdots$.
  Values are shown to 15 decimal places.}
  \label{optim_SM}
  {\renewcommand{\arraystretch}{1.2}
  \begin{ruledtabular}
  \begin{tabular}{@{}c l@{}}
    $K$ & $\{\Delta\tau_i\}_{i=1}^{L}$ \\
    \colrule

    % -------------------- K = 1 --------------------
    \parbox[t]{1.6em}{\centering 1} &
    \parbox[t]{0.72\linewidth}{\raggedright
      [0.250000000000000, 0.250000000000000, 0.250000000000000, 0.250000000000000]\par} \\

    % -------------------- K = 2 --------------------
    \parbox[t]{1.6em}{\centering 2} &
    \parbox[t]{0.72\linewidth}{\raggedright
      [0.078464834591372, 0.124999999999999, 0.171535165408626,
       0.250000000000001, 0.171535165408628, 0.125000000000001,
       0.078464834591374]\par} \\

    % -------------------- K = 3 --------------------
    \parbox[t]{1.6em}{\centering 3} &
    \parbox[t]{0.72\linewidth}{\raggedright
      [0.032292826653268, 0.063198676159276, 0.098322279339589,
       0.151677720660432, 0.154508497187470, 0.154508497187474,
       0.151677720660411, 0.098322279339567, 0.063198676159263,
       0.032292826653251]\par} \\

    % -------------------- K = 4 --------------------
    \parbox[t]{1.6em}{\centering 4} &
    \parbox[t]{0.72\linewidth}{\raggedright
      [0.015740558567996, 0.034840993778921, 0.058326727422055,
       0.092639337921122, 0.109259441431874, 0.122519668299637,
       0.133346545155644, 0.122519668299808, 0.109259441431969,
       0.092639337921442, 0.058326727422301, 0.034840993779070,
       0.015740558568160]\par} \\

    % -------------------- K = 5 --------------------
    \parbox[t]{1.6em}{\centering 5} &
    \parbox[t]{0.72\linewidth}{\raggedright
      [0.008599239338160, 0.020714620114750, 0.036488791784096,
       0.059197348763041, 0.075300583772990, 0.089748095447975,
       0.103048492653219, 0.106902828125843, 0.106902828125842,
       0.103048492653202, 0.089748095447950, 0.075300583772968,
       0.059197348763007, 0.036488791784073, 0.020714620114735,
       0.008599239338148]\par} \\

    % -------------------- K = 6 --------------------
    \parbox[t]{1.6em}{\centering 6} &
    \parbox[t]{0.72\linewidth}{\raggedright
      [0.005099840472689, 0.013081296294727, 0.023951104697432,
       0.039541779209467, 0.052702696284812, 0.065297618986759,
       0.077549425157617, 0.085458220790868, 0.090696933388149,
       0.093242169437664, 0.090696933387990, 0.085458220790698,
       0.077549425157258, 0.065297618986355, 0.052702696284446,
       0.039541779208967, 0.023951104697093, 0.013081296294494,
       0.005099840472515]\par} \\

    % -------------------- K = 7 --------------------
    \parbox[t]{1.6em}{\centering 7} &
    \parbox[t]{0.72\linewidth}{\raggedright
      [0.003218107155529, 0.008668617548996, 0.016370484427467,
       0.027437910209365, 0.037772854809786, 0.048100983157502,
       0.058499818469824, 0.066844362503103, 0.073589329135886,
       0.078650108193099, 0.080847424390996, 0.080847424390950,
       0.078650108192911, 0.073589329135586, 0.066844362502754,
       0.058499818469383, 0.048100983157064, 0.037772854809394,
       0.027437910208959, 0.016370484427200, 0.008668617548824,
       0.003218107155422]\par} \\

    % -------------------- K = 8 --------------------
    \parbox[t]{1.6em}{\centering 8} &
    \parbox[t]{0.72\linewidth}{\raggedright
      [0.002131868452877, 0.005973589234768, 0.011573393450091,
       0.019658348833883, 0.027726790026254, 0.036051120042435,
       0.044640380218587, 0.052270651341861, 0.059000709140531,
       0.064723118015122, 0.068786226300013, 0.071323172469493,
       0.072281264701016, 0.071323172476857, 0.068786226312206,
       0.064723118034568, 0.059000709164710, 0.052270651368313,
       0.044640380249883, 0.036051120072752, 0.027726790054342,
       0.019658348862506, 0.011573393469221, 0.005973589247443,
       0.002131868460270]\par} \\

  \end{tabular}
  \end{ruledtabular}}
\end{table}

%In Table \ref{optim_SM}, we provide more precise values of the pulse intervals reported in Table \ref{optim} of the main text, and show the results up to $K=8$.

In Table \ref{optim_SM}, we provide precise values of the pulse intervals up to $K=8$.

In practice, the pulse timings of the sequence shown in Table \ref{optim_SM} may not be implemented exactly, but can include some perturbations. Let $0=\tau_0<\tau_1<\cdots<\tau_L=1$ be the ideal (normalized) pulse times, and model the implemented times as
\begin{align}
\tilde{\tau}_k = \tau_k + \delta \tau_k, \qquad |\delta \tau_k| \le \varepsilon,
\end{align}
where $\varepsilon$ describes the timing precision.

In physical time $t\in[0,T]$ the ideal pulses are applied at $t_k=\tau_k T$, and the corresponding toggling-frame Hamiltonian $\widetilde H(t)$ is piecewise constant, changing only at the switching times. With timing errors the pulses are applied at $\tilde t_k=\tilde\tau_k T$, giving an implemented Hamiltonian $\widetilde H_\varepsilon(t)$. A straightforward calculation then yields
\begin{align}
\|\widetilde U_\varepsilon(T)-\widetilde U(T)\| \le \int_0^T \|\widetilde H_\varepsilon(t)-\widetilde H(t)\|dt = \mathcal O(J\varepsilon LT),
\end{align}
where $\widetilde U(T)$ and $\widetilde U_\varepsilon(T)$ denote the ideal and implemented evolution operators in the toggling frame. This argument only uses the piecewise-constant structure of the control, and therefore applies generically to DD sequences. 

To check this, we perform numerical simulations (with the same setups as in Section \ref{QDD_comparison}), but now truncate the decimal digits of the optimized pulse timings in Table \ref{optim_SM}, as well as for QDD. For fixed $J$, $K$, and a given number $d$ of kept decimal digits, we vary $T$. A simple estimate suggests that the crossover time $T_c$ at which the truncated sequence starts to deviate from the ideal (full-precision) error curve is determined by
\begin{align}
J T_c^{K+1} \sim J \varepsilon L T_c,
\end{align}
where $L$ is the number of pulses and $\varepsilon$ is the typical timing error induced by truncation. This gives
\begin{align}
T_c \sim (\varepsilon L)^{1/K},
\end{align}
so that if $\varepsilon$ scales as $10^{-d}$, we expect $T_c \propto 10^{-d/K}$. This scaling behavior is consistent with the numerical results shown in Figs.~\ref{fig:trunc_DD} and \ref{fig:trunc_QDD}, where the deviation point between the truncated and ideal curves moves systematically as the number of retained digits is increased.

\begin{figure}[t]
  \centering
  \includegraphics[width=0.95\textwidth]{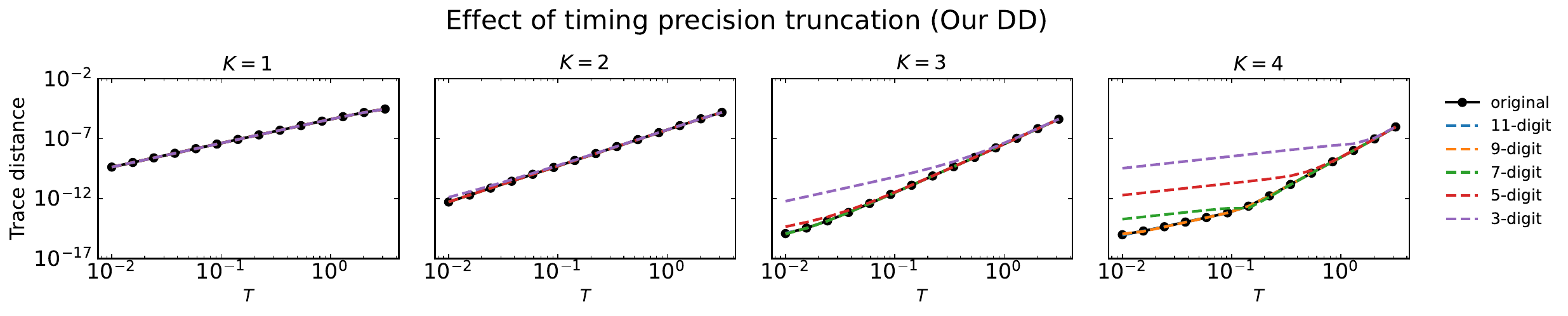}
  \caption{Error vs. $T$ for our DD sequence with various digit truncations.}
  \label{fig:trunc_DD}
\end{figure}

\begin{figure}[t]
  \centering
  \includegraphics[width=0.95\textwidth]{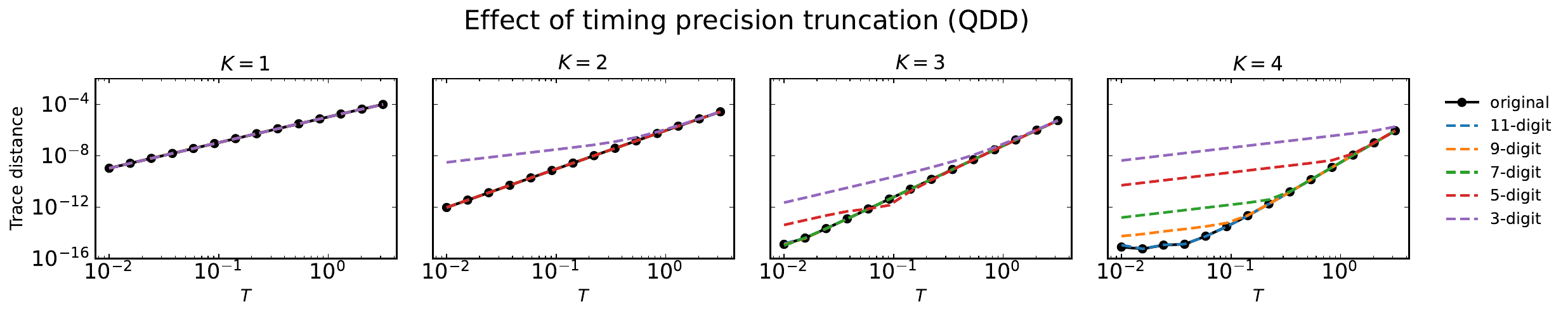}
  \caption{Same analysis as Fig.~\ref{fig:trunc_DD} for QDD, showing consistent $T_c$ behavior.}
  \label{fig:trunc_QDD}
\end{figure}

{
\subsection{Robust variants under coherent pulse errors and comparison with other robust DD}
%\lk{This newly added subsection addresses Referee A’s major comments regarding the effect of coherent errors and comparison to UR DD sequence.}

In this section, we study the effect of coherent pulse errors arising from, for simplicity, a uniform over- or under-rotation in every pulse. We first note that our sequence is already first-order robust against such coherent pulse errors. We then show that this robustness can be promoted to higher order simply by alternating the sign of the rotation angle, without increasing the total sequence length. As a representative example, we apply this construction to the 16-length DD sequence used in Fig.~\ref{fig:QDD_comparison_main} of the main text and compare its performance with quadratic DD and with the robust phase-shifted XY16 sequence of Ref.~\cite{souza2011robust}, which exactly cancels these pulse errors.

First, we briefly note that the alternating single-qubit pulse pattern used in our explicit constructions is automatically first-order robust against a uniform coherent over/under-rotation error. Let the ideal pulses be
\begin{align}
P_j=\exp\left(-i\frac{\pi}{2}\sigma_{a_j}\right), \qquad 
a_j= \begin{cases}
 x,& j\ \text{odd},\\
 z,& j\ \text{even},
 \end{cases}
\end{align}
so that, up to an irrelevant global phase, the ideal pulse word is
$X\to Z\to X\to Z\to\cdots$. The same argument applies, without modification, to any pair of orthogonal single-qubit Pauli axes.

Following the standard coherent pulse-error model used in analyses of robust DD sequences~\cite{maudsley1986modified,ryan2010robust,souza2011robust,souza2012robust,genov2017arbitrarily,ezzell2023dynamical}, we replace each ideal Pauli pulse by
\begin{align}
\widetilde P_j = \exp\left(-i(1+\varepsilon_r)\frac{\pi}{2}\sigma_{a_j}\right),
\qquad |\varepsilon_r|\ll 1,
\label{eq:coherent_pulse_error_model}
\end{align}
where the same relative rotation error $\varepsilon_r$ is applied to every pulse.

To analyze the resulting control error, let $F_1=I$ and define recursively
\begin{align}
F_{j+1}=P_jF_j,
\end{align}
so that $F_j$ is the ideal control frame immediately before the $j$th pulse. Writing
\begin{align}
\widetilde P_j = E_j P_j, \qquad E_j:=\exp\left(-i\varepsilon_r\frac{\pi}{2}\sigma_{a_j}\right),
\end{align}
the net control error after $L$ pulses is
\begin{align}
V_L &:= (P_L\cdots P_1)^\dagger (\widetilde P_L\cdots \widetilde P_1) \\ 
&= \exp\left(-i\varepsilon_r\frac{\pi}{2}F_L^\dagger \sigma_{a_L} F_L\right) \cdots \exp\left(-i\varepsilon_r\frac{\pi}{2}F_1^\dagger \sigma_{a_1} F_1\right).
\label{eq:control_error_expansion}
\end{align}
Expanding to first order in $\varepsilon_r$ gives
\begin{align}
V_L = I - i\varepsilon_r\frac{\pi}{2} \sum_{j=1}^{L} F_j^\dagger \sigma_{a_j} F_j + \mathcal O(\varepsilon_r^2).
\label{eq:control_error_first_order}
\end{align}

Consider one complete $XZXZ$ block. The corresponding toggled error generators are
\begin{align}
F_1^\dagger X F_1 &= X, \quad F_2^\dagger Z F_2 &= XZX = -Z, \quad F_3^\dagger X F_3 &= (ZX)^\dagger X (ZX) = -X, \quad F_4^\dagger Z F_4 &= (XZX)^\dagger Z (XZX) = Z,
\end{align}
and therefore
\begin{align}
\sum_{j=1}^{4} F_j^\dagger \sigma_{a_j} F_j
= X-Z-X+Z = 0.
\end{align}
Hence every complete four-pulse block carries no linear term in $\varepsilon_r$. Equivalently, any sequence formed by concatenating complete alternating $XZXZ$ blocks satisfies
\begin{align}
V_L = I + \mathcal O(\varepsilon_r^2).
\end{align}
This is the same first-order self-compensation mechanism that underlies the robustness of XY4-type cycles against these errors \cite{maudsley1986modified,ryan2010robust,souza2011robust,genov2017arbitrarily}.

\begin{figure}[t!]
  \centering
  \includegraphics[width=0.95\textwidth]{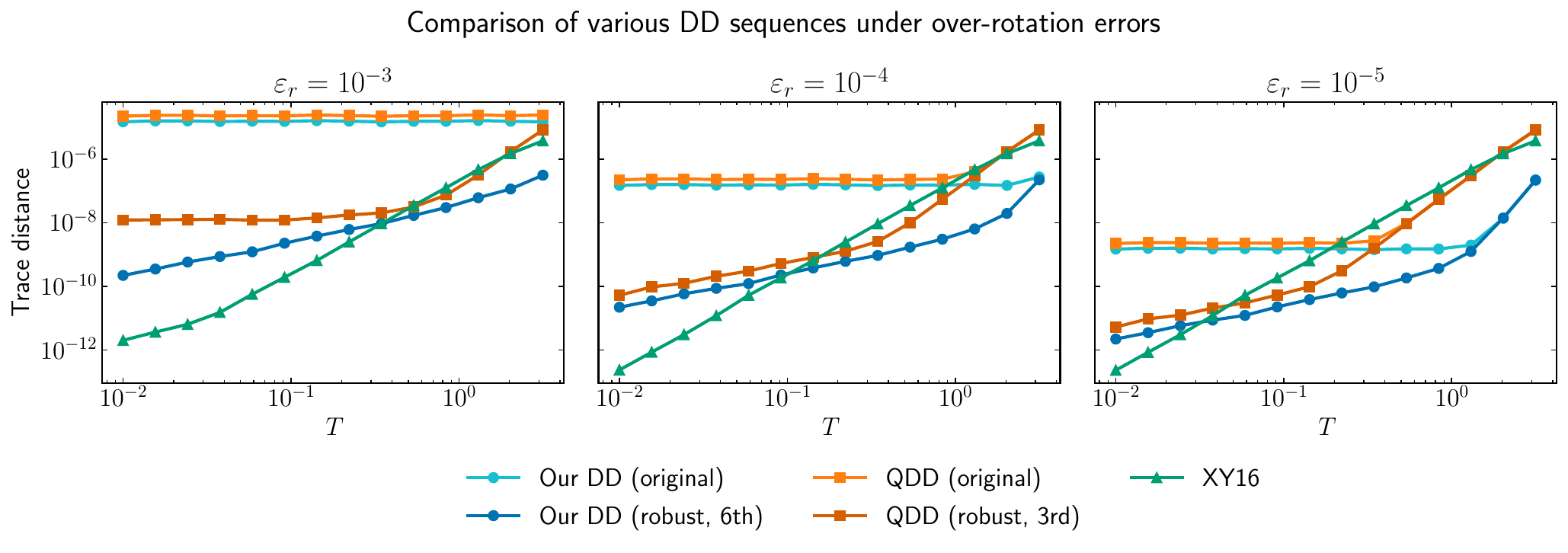}
  \caption{{Comparison between several length-16 DD sequences in the presence of uniform fixed over-rotations with physically motivated values of $\varepsilon_r$, at fixed $J=10^{-4}$. We compare the robust variants of our DD sequence with the robust XY16 sequence \cite{souza2011robust}, which fully cancels the error due to $\varepsilon_r$, and with a robust variant of QDD. Whereas the original sequence exhibits an error floor set by $\varepsilon_r$, the robust variant performs significantly better and yields the smallest error among all sequences across a broad range of parameters.}}
  \label{fig:robust_DD}
\end{figure}

\subsubsection{High-order robust variants for $K=5$ sequence and numerical simulation}

This first-order robustness can be promoted to higher order by allowing signed $\pi/2$ pulses,
\begin{align}
P_j^{(s_j)}:=\exp\left(-i s_j \frac{\pi}{2}\sigma_{a_j}\right),\qquad
\widetilde P_j^{(s_j)}:=\exp\left(-i s_j (1+\varepsilon_r)\frac{\pi}{2}\sigma_{a_j}\right),
\qquad s_j\in\{\pm1\},
\end{align}
where $\bar X$ and $\bar Z$ denote $s_j=-1$. Since $P_j^{(-1)}=-P_j^{(+1)}$, changing the sign does not alter the ideal DD action, except for an overall phase. However, it does modify the coherent pulse-error propagator. With $\eta:=\frac{\pi}{2}\varepsilon_r$, Eq.~\eqref{eq:control_error_expansion} becomes
\begin{align}
V_L(\mathbf s)=
\exp\left(-i\eta s_L F_L^\dagger \sigma_{a_L} F_L\right)\cdots
\exp\left(-i\eta s_1 F_1^\dagger \sigma_{a_1} F_1\right),
\end{align}
so an appropriate choice of the sign pattern $\mathbf s=(s_1,\dots,s_L)$ can cancel higher-order terms in $\varepsilon_r$ without increasing the sequence length.

For the 16-pulse sequence used in Fig.~\ref{fig:QDD_comparison_main} (i.e., $K=5$ in Table \ref{optim_SM}), one convenient sixth-order-robust choice is
\begin{align}
\mathcal S_6= X \bar Z X \bar Z  \bar X \bar Z \bar X \bar Z 
\bar X Z \bar X Z  X Z X Z .
\label{eq:S6_sequence}
\end{align}
A direct expansion gives
\begin{align}
V_{16}^{(\mathcal S_6)} = I-i 16\eta^6\sigma_y+\mathcal O(\eta^7)
= I-i \frac{\pi^6}{4}\varepsilon_r^6\sigma_y+\mathcal O(\varepsilon_r^7).
\label{eq:S6_error_scaling}
\end{align}
Thus, all coherent pulse-error terms up to order $\varepsilon_r^5$ vanish, while the pulse count and the DD performance in the ideal pulse limit remains unchanged. 

To benchmark the sign-flipped sequence \eqref{eq:S6_sequence}, we compare it with QDD and with a standard robust DD protocol at the same pulse budget. As the robust-DD reference, we use the (phase-shifted) XY16 sequence~\cite{souza2011robust},
\begin{align}
X Y X Y  Y X Y X  \bar X \bar Y \bar X \bar Y  \bar Y \bar X \bar Y \bar X ,
\end{align}
where $\bar Y$ denotes the negative-angle $Y$ pulse. This sequence uses the same pulse budget of $16$ pulses and completely cancels any $\varepsilon_r$ error. For QDD with $K_{\rm QDD}=3$, we similarly allow sign flips and exhaustively search over all $2^{16}$ choices. The best sign-flipped variant we find is
\begin{align}
X X \bar X Z  X X \bar X Z  \bar X X \bar X Z  \bar X \bar X X Z ,
\end{align}
which is third-order robust, so that its leading coherent pulse-error term scales as $\mathcal O(\varepsilon_r^4)$.

As a hardware-motivated scale for $\varepsilon_r$, we use the recent Google Sycamore characterization of coherent control errors by Ref.~\cite{gross2024characterizing}. That work reports sub-milliradian precision in estimating coherent gate parameters on a Sycamore processor, which places the residual coherent angle error naturally at the milliradian scale. In our convention, the physical rotation-angle error is $\delta \theta = \pi \varepsilon_r$; thus a representative $\delta \theta \sim 10^{-3}$ rad corresponds to $\varepsilon_r \sim 3 \times 10^{-4}$. We therefore take $\varepsilon_r \in [10^{-5}, 10^{-3}]$ for well-calibrated superconducting hardware. Fig.~\ref{fig:robust_DD} shows the resulting comparison of the trace distance error for the different length-16 DD sequences across this parameter range. While the original sequence exhibits the error floor seen in figure due to $\varepsilon_r$, the robust variant performs substantially better and, over a broad range of parameters, yields the smallest error among all sequences considered. This suggests that equally simple robust variants can be designed for other values of the order $K$, thereby improving the robustness of our DD family to such control errors and strengthening its practical relevance.

\begin{figure}[t]
    \centering
    \includegraphics[width=0.8\linewidth]{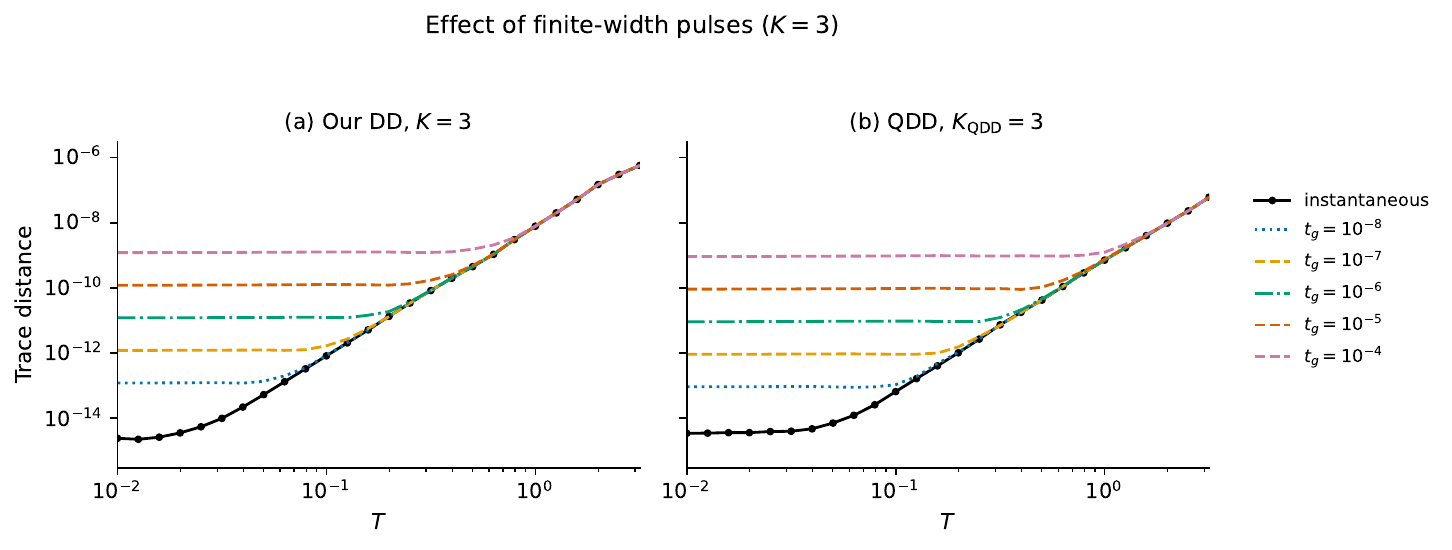}
    \caption{{Finite-pulse-width effects of (a) our optimized $K=3$ DD sequence and (b) QDD with $K_{\mathrm{QDD}}=3$. Each instantaneous pulse is replaced by a centered rectangular $\pi$ pulse of duration $t_g$. The same parameters as in Fig.~\ref{fig:trunc_DD} are used.}}
    \label{fig:finite-width}
\end{figure}

\subsection{Effect of finite pulse widths}

Our analysis assumes ideal instantaneous pulses. For pulses of duration $t_g>0$, the moment constraints do not generally cancel the finite pulse-width contributions, so finite pulse widths introduce an additional error floor. Writing the implemented $j$th pulse as $\widetilde P_j=E_jP_j$, where
\begin{align}
E_j=\mathcal{T}\exp\left(-i\int_0^{t_g}\widetilde H_j(t)dt\right),
\qquad
\|\widetilde H_j(t)\| = \mathcal O(J),
\end{align}
we have
\begin{align}
\|\widetilde P_j-P_j\|
= \|E_j-I\|
\le \int_0^{t_g}\|\widetilde H_j(t)\|dt
= \mathcal O(Jt_g).
\end{align}
Hence
\begin{align}
\|\widetilde U_{\mathrm{fw}}(T)-U_{\mathrm{ideal}}(T)\|
\le \sum_{j=1}^{L}\|\widetilde P_j-P_j\|
= \mathcal O(JLt_g).
\end{align}
Comparing with $ \mathcal O(JT^{K+1})$ yields
\begin{align}
T_c\sim (Lt_g)^{1/(K+1)}.
\end{align}
}

{We numerically test our DD sequence and QDD by replacing instantaneous pulses with rectangular pulses of finite duration $t_g$. As shown in Fig.~\ref{fig:finite-width}, both sequences develop an error floor that scales linearly with $t_g$, consistent with our analysis.}

{
\section{Suppressing $2$-local noise: high-order chromatic Hadamard DD}
%\lk{This newly added section addresses Referee A’s major concerns about applications to 2-local noise, particularly using CHaDD, as well as Referee B’s criticism that the example is too ``limited''. In fact, it also partially resolves one of the open problems raised in the CHaDD paper!} 

In this section, we move beyond 1-local system-bath couplings and consider decoupling 2-local couplings. For qubits placed on the vertices of a graph, with edges representing system-bath couplings, chromatic Hadamard dynamical decoupling (CHaDD) \cite{brown2025efficient} was recently proposed as a \textit{first-order} DD scheme whose sequence length scales with the graph’s chromatic number, making it more efficient than previous schemes. Our main result, Theorem~\ref{thm:existence}, applies directly to this setting and implies the existence of efficient higher-order CHaDD sequences in the weak-coupling regime. For 2- and 3-colorable graphs, which are exactly the examples considered in the original CHaDD paper \cite{brown2025efficient}, we explicitly construct such higher-order sequences and compare their performance. Hence, we partially answer one of the open problems posed in the original CHaDD paper, namely, the design of higher-order CHaDD schemes.

\subsection{Setup and CHaDD}
We consider a system of qubits placed on the vertices of a graph $\Gamma =(V,E)$, where $|V|=n$, and assume that the system-bath coupling is graph-local and at most two-body. Concretely, we write
\begin{align}
H = H_0 + H_{SB},
\end{align}
where $H_0$ denotes the part left invariant by the control, and
\begin{align}
H_{SB} &= \sum_{v\in V}\sum_{\alpha\in\{x,y,z\}}
\sigma_v^\alpha \otimes B_{v,\alpha} + \sum_{(u,v)\in E}\sum_{\alpha,\beta\in\{x,y,z\}}
\sigma_u^\alpha \sigma_v^\beta \otimes B_{uv,\alpha\beta}.
\label{eq:graph_local_noise}
\end{align}
Thus, arbitrary one-local terms are allowed on vertices, while two-local terms are allowed only on edges of $G$.

The \emph{chromatic number} $\chi$ of a graph is the smallest integer for which the vertex set $V$ can be partitioned into $\chi$ disjoint subsets $V = V_1 \sqcup \cdots \sqcup V_\chi$ such that no edge of $\Gamma$ has both endpoints in the same subset. Equivalently, a proper coloring of $\Gamma$ is a map
\begin{align}
c:V\to[\chi]:=\{1,2,\dots,\chi\}
\end{align}
such that $c(u)\neq c(v)$ whenever $(u,v)\in E$. We fix such a proper coloring and write
\begin{align}
V_a := c^{-1}(a), \qquad P_a := \prod_{v\in V_a} P_v, \qquad
a=1,\dots,\chi.
\end{align}
Since the color classes are disjoint, the operators $\{P_a\}_{a=1}^\chi$ mutually commute.

The CHaDD construction of Ref.~\cite{brown2025efficient} provides first-order decoupling schemes adapted to the chromatic structure of $\Gamma$, both for single-axis graph-local noise and for general graph-local noise. Applying Theorem~\ref{thm:existence} to the corresponding CHaDD decoupling groups yields the following higher-order version.

\begin{corollary}[High-order CHaDD]
\label{cor:high-order-chadd}
Assume that $H_0$ is invariant under the CHaDD protocols. Then, for every integer $K\ge 1$, there exists a piecewise-constant DD sequence such that
\begin{align}
M_{\alpha,m}=0,
\qquad
m=0,1,\dots,K-1,
\end{align}
for all Pauli terms appearing in $H_{SB}$. Consequently, in the weak-coupling regime,
\begin{align}
\|U(T)-U_0(T)\| = \mathcal O(JT^{K+1})+ \mathcal O(J^2T^2).
\end{align}
Moreover, the number $L$ of pulses can be chosen so that
\begin{align}
L\le
\begin{cases}
%\bigl(2^{\lceil \log_2(\chi+1)\rceil}-1\bigr)K \le 
(2\chi-1)K,
& \text{if } H_{SB} = \displaystyle \sum_{v\in V}\sigma_v^z\otimes B_{v,z} + \sum_{(u,v)\in E}\sigma_u^z\sigma_v^z\otimes B_{uv,zz}, \\
%\bigl(N_{\mathrm{ma}}(\chi)-1\bigr)K \le 
(6\chi+9)K,
& \text{for general } H_{SB} \text{ of the form \eqref{eq:graph_local_noise}}. \end{cases}
\end{align}
%where $N_{\mathrm{ma}}(\chi)$ denotes the multi-axis CHaDD sequence depth, satisfying $3\chi+1\le N_{\mathrm{ma}}(\chi)\le 2(3\chi+5)$. 
\end{corollary}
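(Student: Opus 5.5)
The plan is to reduce the corollary to Theorem~\ref{thm:existence} by exhibiting, in each of the two cases, an explicit Pauli decoupling group $\mathcal G$ in the sense of Eq.~\eqref{decoupling-group-condition} whose order is at most $2\chi$ (single-axis case) or $6\chi+10$ (general case). Once such a group is fixed, Theorem~\ref{thm:existence} directly supplies switching functions with $M_{\alpha,m}=0$ for $m<K$ and at most $(|\mathcal G|-1)K$ pulses. The error bound then follows verbatim from the End Matter Magnus argument. That argument only uses the operator moment identities Eq.~\eqref{eq:operator-moment-identities}, the invariance $[H_0,g\otimes I_B]=0$ (assumed here as ``$H_0$ invariant under CHaDD''), and $\|H_{SB}\|=J$.

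\textbf{Single-axis case.} Here every term is $Z_v$ or $Z_uZ_v$ with $(u,v)\in E$. I will take $\mathcal G$ to be the group generated by collective pulses $X_a:=\prod_{v\in V_a}X_v$, or a subgroup of it, indexed by a binary code. Concretely, I will choose $N:=2^{\lceil\log_2(\chi+1)\rceil}\le 2\chi$ and assign to each colour $a$ a distinct nonzero vector $w_a\in\mathbb F_2^{m}$, where $m=\log_2 N$. The group elements are $g_s=\prod_a X_a^{\langle s,w_a\rangle}$ for $s\in\mathbb F_2^m$, so $|\mathcal G|=N$. Conjugation gives $\gamma_{Z_v}(g_s)=(-1)^{\langle s,w_{c(v)}\rangle}$ and $\gamma_{Z_uZ_v}(g_s)=(-1)^{\langle s,w_{c(u)}+w_{c(v)}\rangle}$. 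Both sum to zero over $s$, because $w_{c(v)}\neq0$ and, since the colouring is proper, $w_{c(u)}\ne w_{c(v)}$. This is the Hadamard-character orthogonality underlying CHaDD. It gives $\Pi_{\mathcal G}(H_{SB})=0$ and hence $L\le(N-1)K\le(2\chi-1)K$.

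\textbf{General case.} I would first reduce Eq.~\eqref{eq:graph_local_noise} to vanishing of the sign sums $\sum_g\gamma_P(g)$ for every Pauli $P=\sigma^\alpha_u\sigma^\beta_v$ on an edge and every single-site $\sigma^\alpha_v$. I would then take the multi-axis CHaDD group of Ref.~\cite{brown2025efficient}, whose depth $N_{\rm ma}(\chi)$ satisfies $N_{\rm ma}\le 2(3\chi+5)=6\chi+10$, which gives $L\le(6\chi+9)K$.

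If I must build this group myself, my first attempt is a product construction. Each vertex of colour $a$ receives the single-qubit Pauli $\phi_a(s)\in\{I,X,Y,Z\}$, determined by a map from a group $S$ to $\mathbb F_2^2$ applied colourwise. The requirements are that each $\phi_a$ is surjective onto a decoupling set for single-qubit Paulis, and that for $a\ne b$ the pair $(\phi_a,\phi_b)$ twirls every nontrivial two-qubit Pauli to zero. In character language, for all nonzero $(\mu,\nu)$ the characters $s\mapsto\langle\mu,\phi_a(s)\rangle_{\rm symp}+\langle\nu,\phi_b(s)\rangle_{\rm symp}$ must be nontrivial. This holds when the linear maps $\phi_a:\mathbb F_2^m\to\mathbb F_2^2$ have kernels in general position, so that their transposes pairwise have trivially intersecting images. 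That is an orthogonal-array or MDS-code condition over $\mathbb F_4$, with the needed $m$ giving $|\mathcal G|=4^{\lceil\log_4(\cdot)\rceil}$, which is $O(\chi)$.

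\textbf{Main obstacle.} The hard step is matching the exact constant $6\chi+9$. This requires verifying, or importing from Ref.~\cite{brown2025efficient}, that the multi-axis CHaDD sequence really is (or contains) a Pauli group of order at most $6\chi+10$ satisfying the twirl condition on all edge terms. The existence of an $O(\chi)$ group is routine from the code construction; the sharp constant is not.
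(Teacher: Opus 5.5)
Your proposal is correct and takes the same route as the paper: the corollary is obtained by applying Theorem~\ref{thm:existence} to the CHaDD decoupling groups. In the single-axis case this is a group of order $2^{\lceil\log_2(\chi+1)\rceil}\le 2\chi$, and in the general case the multi-axis CHaDD group of order $N_{\mathrm{ma}}(\chi)\le 2(3\chi+5)$ is imported from Ref.~\cite{brown2025efficient}, exactly as you do, with your explicit character check of the single-axis group as a welcome addition. Your fallback construction would also reach the sharp constant if you allowed odd dimension $m$ and used maximal partial line spreads of $\mathbb{F}_2^m$ (size $(2^m-5)/3$ for odd $m$), since the smallest admissible $2^m$ then satisfies $3\chi+1\le 2^m<2(3\chi+5)$.
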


Hence, Corollary~\ref{cor:high-order-chadd} partially resolves an open problem posed in Ref.~\cite{brown2025efficient}, namely, the construction of higher-order CHaDD schemes using nonuniform pulse intervals.

\subsection{Examples}
We now consider the same two examples studied in the original CHaDD paper~\cite{brown2025efficient}, namely the single-axis setting for $\chi=2$ and $\chi=3$. In both cases, the relevant decoupling group is isomorphic to $\mathbb Z_2^2$, and the induced conjugation characters on the relevant error sectors coincide with the three nontrivial characters of the single-qubit Pauli decoupling group. Consequently, the sign triples $\mathbf s_\ell$ and hence the normalized moment-cancellation constraints are exactly the same as in the single-qubit general-decoherence problem!

\subsubsection{$\chi=2$: bipartite graphs}
Let $\Gamma$ be bipartite, with vertex partition $V=A\sqcup B$, and define
\begin{align}
g_x := X_A, \qquad g_z := X_B, \qquad g_y := g_xg_z = X_AX_B.
\end{align}
Then the decoupling group found in \cite{brown2025efficient} is
\begin{align}
G=\{I,g_x,g_y,g_z\}\cong \mathbb Z_2^2.
\end{align}
Under the identification
\begin{align}
I \leftrightarrow I, \qquad g_x \leftrightarrow X, \qquad g_y \leftrightarrow Y, \qquad g_z \leftrightarrow Z,
\end{align}
the relevant one- and two-body error sectors, namely $Z_v$ for $v\in A$, $Z_v$ for $v\in B$, and $Z_uZ_v$ for $(u,v)\in E$, realize exactly the three nontrivial characters of the single-qubit Pauli decoupling group. Therefore the control frames $f_\ell$, sign triples $\mathbf s_\ell$, and normalized moment constraints are identical to those in the single-qubit problem. Hence the same optimized interval vectors $\{\Delta\tau_\ell\}$ apply verbatim after the substitution
\begin{align}
X \mapsto X_A, \qquad Z \mapsto X_B.
\end{align}
Equivalently, the high-order sequence is the alternating pulse pattern
\begin{align}
X_A \to X_B \to X_A \to X_B \to \cdots,
\end{align}
with the same inter-pulse intervals as in Table~\ref{optim_SM}. This sequence is used to produce Fig.~\ref{fig:chadd_3x3} in the main text.

\subsubsection{$\chi=3$: three-colorable graphs}
Let $\Gamma$ be $3$-colorable, with color classes $V_1,V_2,V_3$, and define
\begin{align}
g_x := X_{V_1}X_{V_3}, \qquad
g_z := X_{V_1}X_{V_2}, \qquad
g_y := g_xg_z = X_{V_2}X_{V_3}.
\end{align}
Then the decoupling group found in \cite{brown2025efficient} is again
\begin{align}
G=\{I,g_x,g_y,g_z\}\cong \mathbb Z_2^2.
\end{align}
Under the same identification with $\{I,X,Y,Z\}$, the one-body sectors $Z_v$ for $v\in V_1,V_2,V_3$ realize the three nontrivial characters of the single-qubit Pauli decoupling group. Moreover, since every edge joins two distinct color classes, each two-body term $Z_uZ_v$ with $(u,v)\in E$ reproduces one of these same three characters according to the color pair of $(u,v)$. Therefore the induced sign patterns are exactly the same as in the single-qubit setting, and hence the control frames $f_\ell$, sign triples $\mathbf s_\ell$, and normalized moment constraints are unchanged. Thus the same optimized interval vectors $\{\Delta\tau_\ell\}$ may be reused after the substitution
\begin{align}
X \mapsto X_{V_1}X_{V_3}, \qquad
Z \mapsto X_{V_1}X_{V_2}.
\end{align}
Equivalently, the high-order sequence is the alternating pulse pattern
\begin{align}
(X_{V_1}X_{V_3}) \to (X_{V_1}X_{V_2}) \to (X_{V_1}X_{V_3}) \to (X_{V_1}X_{V_2}) \to \cdots,
\end{align}
with the same inter-pulse intervals as in Table~\ref{optim_SM}.
}

\end{document}